\newtheorem{theorem}{Theorem}[section]
\newtheorem{lemma}[theorem]{Lemma}
\newtheorem{meta-theorem}[theorem]{Meta-Theorem}
\newtheorem{claim}[theorem]{Claim}
\newtheorem{conjecture}[theorem]{Conjecture}
\newtheorem{corollary}[theorem]{Corollary}
\newtheorem{observation}[theorem]{Observation}
\newtheorem{definition}[theorem]{Definition}
\newtheorem*{llll}{Lov\'asz Local Lemma (LLL)}
\definecolor{darkgreen}{rgb}{0,0.5,0}
\crefname{theorem}{Theorem}{Theorems}
\Crefname{lemma}{Lemma}{Lemmas}
\Crefname{conjecture}{Conjecture}{Conjectures}
\Crefname{claim}{Claim}{Claims}
\Crefname{remark}{Remark}{Remarks}
\Crefname{observation}{Observation}{Observations}
\newcommand{\eps}{\varepsilon}
\newcommand{\poly}{\operatorname{\text{{\rm poly}}}}
\newcommand{\prob}[1]{P(#1)}
\newcommand{\evt}{\mathcal E}
\newcommand{\vbl}{\operatorname{vbl}}
\newcommand{\alg}{\mathcal A}
\newcommand{\srep}{S_\textrm{rep}}
\newcommand{\snon}{S_\textrm{non}}
\newcommand{\isnon}{{S^{\textrm{o}}_\textrm{non}}}
\newcommand{\R}{\mathbb{R}}
\begin{document}
\date{}
\title{Generalizing the Sharp Threshold Phenomenon for the Distributed Complexity of the Lovász Local Lemma}
\author{Sebastian Brandt\\
\small ETH Zurich \\
\small brandts@ethz.ch\\
\and
Christoph Grunau\\
\small ETH Zurich \\
\small grunau@student.ethz.ch\\
\and 
V\'aclav Rozho\v n\\
\small ETH Zurich \\
\small rozhonv@ethz.ch\\
}
\maketitle

\begin{abstract}
	Recently, Brandt, Maus and Uitto [PODC'19] showed that, in a restricted setting, the dependency of the complexity of the distributed Lov\'asz Local Lemma (LLL) on the chosen LLL criterion exhibits a sharp threshold phenomenon: They proved that, under the LLL criterion $p2^d < 1$, if each random variable affects at most $3$ events, the deterministic complexity of the LLL in the LOCAL model is $O(d^2 + \log^* n)$. In stark contrast, under the criterion $p2^d \leq 1$, there is a randomized lower bound of $\Omega(\log \log n)$ by Brandt et al. [STOC'16] and a deterministic lower bound of $\Omega(\log n)$ by Chang, Kopelowitz and Pettie [FOCS'16].
	Brandt, Maus and Uitto conjectured that the same behavior holds for the unrestricted setting where each random variable affects arbitrarily many events.
	
    We prove their conjecture, by providing an algorithm that solves the LLL in time $O(d^2 + \log^* n)$ under the LLL criterion $p2^d < 1$, which is tight in bounded-degree graphs due to an $\Omega(\log^* n)$ lower bound by Chung, Pettie and Su [PODC'14].
    By the work of Brandt, Maus and Uitto, obtaining such an algorithm can be reduced to proving that all members in a certain family of functions in arbitrarily high dimensions are convex on some specific domain.
    Unfortunately, an analytical description of these functions is known only for dimension at most $3$, which led to the aforementioned restriction of their result.
    While obtaining those descriptions for functions of (substantially) higher dimension seems out of the reach of current techniques, we show that their convexity can be inferred by combinatorial means. 
\end{abstract}

\maketitle

\section{Introduction}
\subsection{Background}
The Lov\'asz Local Lemma is a celebrated result from 1975 due to Erd{\H o}s and Lov\'asz \cite{erdos75local}, with applications in many types of problems such as coloring, scheduling or satisfiability problems \cite{alonspencer,Beck1991,SuLLL2017,Czumaj2000,czumaj00algorithmic,elkin15matching,haeupler10constructive,leighton99fast,Moser09}. It states the following.
\begin{llll}
    Let $\{X_1, \dots, X_m\}$ be a set of mutually independent random variables and $\evt_1, \dots, \evt_n$ probabilistic events that depend on the $X_i$. For each $\evt_i$, let $\vbl(\evt_i)$ denote the random variables $\evt_i$ depends on. We say that $\evt_i$ and $\evt_j$ share a random variable if $\vbl(\evt_i) \cap \vbl(\evt_j) \neq \emptyset$. Assume that there is some $p < 1$ such that for each $1 \leq i \leq n$, we have $\prob{\evt_i} \leq p$ , and let $d$ be a positive integer such that each $\evt_i$ shares a random variable with at most $d$ other $\evt_j$ (where $j \neq i$). Then, if $4pd \leq 1$, there exists an assignment of values to the random variables such that none of the events $\evt_i$ occurs.\footnote{We note that the \emph{LLL criterion} $4pd \leq 1$ guaranteeing the existence of the desired variable assignment is not optimal and has been subject to improvements by Spencer \cite{Spencer77} and Shearer \cite{Shearer85}.}
\end{llll}

The LLL can be seen as a generalization of the well-known fact that for any set of \emph{independent} events that all occur with probability strictly less than $1$, the probability that none of the events occurs is non-zero: some amount of dependency between the events is tolerable for preserving the avoidance guarantee---how much exactly depends on the parameter $p$ that bounds the occurrence probabilities of the events.

While being an indispensable tool for applying the probabilistic method, the LLL, in its original form, is of limited usefulness if seen from an algorithmic standpoint, as it gives a purely existential statement and does not provide a method for finding such an assignment to the random variables.
The underlying algorithmic question of computing such an assignment, called the \emph{algorithmic (or constructive) LLL (problem)} received considerable attention in a series of papers \cite{Alon1991,Czumaj2000,Molloy1998,Moser08,Moser09,Srinivasan2008}, starting with Beck \cite{Beck1991} in the 90s, and culminating in a breakthrough result by Moser and Tardos \cite{MoserTardos10} in 2010.
The latter work showed that an assignment to the random variables that avoids all events can be found quickly by a simple resampling approach.
Moreover, this approach is easily parallelizable, and implies a (randomized) distributed algorithm that finds the desired assignment in $O(\log^2 n)$ rounds of communication in a distributed setting w.h.p.\footnote{As usual, w.h.p. stands for ``with probability at least $1-1/n$".} for the LLL criterion $ep(d+1) < 1$.
In the following, we take a closer look at the distributed version of the algorithmic LLL, the main topic of this work.

\paragraph{The Distributed LLL}
Let an instance of the LLL be given by mutually independent random variables $X_1, \dots, X_m$ and events $\evt_1, \dots, \evt_n$ satisfying some LLL criterion that guarantees existence of an assignment avoiding all events.
The distributed version of the LLL is commonly phrased using the notion of the so-called \emph{dependency graph}.
In the dependency graph of an LLL instance, the events $\evt_i$ are the nodes, and there is an edge between two events $\evt_i, \evt_j$ if the two events share a variable.
Each node $\evt_i$ is aware of $\vbl(\evt_i)$ and knows for exactly which combinations of values for the random variables in $\vbl(\evt_i)$ the event $\evt_i$ occurs.
As before, the task is to find an assignment to the variables such that none of the events occurs.
To specify the output, each node $\evt_i$ has to output a value for each variable it depends on, and any two nodes outputting a value for the same random variable have to agree on the value.

We will consider the LLL in the LOCAL model\footnote{The communication graph for the LLL is the dependency graph. For details regarding the LOCAL model, we refer to \cref{sec:model}.} of distributed computing \cite{linial92,peleg00}, in which the LLL has been the focus of a number of important works in recent years (see \cref{sec:related} for an in-depth overview).
One particularly intriguing result underlining the importance of the LLL was given by Chang and Pettie \cite{ChangHierarchy17}: they show that \emph{any} problem from a very natural problem class, called \emph{locally checkable labelings}\footnote{Roughly speaking, these are problems for which the correctness of the global solution can be verified by checking the correctness of the output in the local neighborhood of each node.}, that has sublogarithmic randomized complexity also admits a randomized algorithm that solves it in time $T_{\operatorname{LLL}}(n)$, where $T_{\operatorname{LLL}}(n)$ denotes the randomized complexity of the LLL under a polynomial criterion (i.e., a criterion of the form $pd^c \in O(1)$ for an arbitrarily large constant $c$).

\paragraph{The LLL criterion}
As can be expected, the complexity of solving the algorithmic LLL depends on the chosen LLL criterion.
Strengthening the LLL criterion, i.e., restricting\footnote{There is some disagreement about whether this should be called a strengthening or weakening of the LLL criterion. We will use the same (sensible) terminology as the closest work to ours by Brandt, Maus and Uitto \cite{brandt2019sharp}, to make it easier to relate the results.} the set of allowed LLL instances by making fewer pairs $(p, d)$ satisfy the criterion, clearly can only reduce the complexity of the LLL problem, but it is a major open question precisely how the LLL complexity relates to the chosen criterion.
At which points in strengthening the LLL criterion does the asymptotic complexity of the LLL change and do we obtain smooth or sharp transitions between the different complexities?

While it is known, due to a result by Chung, Pettie and Su \cite{SuLLL2017}, that $\Omega(\log^* n)$ rounds are required for any LLL criterion, the only lower bounds known so far that could possibly be used to obtain a separation between the complexities for different criteria are an $\Omega(\log \log n)$ lower bound for randomized algorithms by Brandt et al.\ \cite{LLL_lowerbound}, and an $\Omega(\log n)$ lower bound for deterministic algorithms by Chang, Pettie and Kopelowitz \cite{Chang2016a}, which both hold even under the strong criterion $p2^d \leq 1$.
It is natural to ask whether any further strengthening of the LLL criterion breaks the lower bound or whether the lower bound can be extended to stronger criteria.

Very recently, Brandt, Maus and Uitto \cite{brandt2019sharp} showed that if we restrict the random variables to affect at most $3$ events each (which they call \emph{rank} at most $3$), then already under the minimally strengthened criterion $p2^d < 1$, there is a deterministic LLL algorithm with a complexity of $O(\poly d + \log^* n)$.
They conjectured that this behavior also holds without their restriction on the variables.

\begin{conjecture}[\cite{brandt2019sharp}, rephrased]\label{conj:bmu}
    There is a (deterministic) distributed algorithm that solves the LLL problem in time $O(d^2+ log^*n)$ under the criterion $p2^d < 1$.
\end{conjecture}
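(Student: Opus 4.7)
The plan is to invoke the reduction established by Brandt, Maus and Uitto in \cite{brandt2019sharp}: producing a deterministic $O(d^2 + \log^* n)$-round algorithm under the criterion $p2^d < 1$ boils down to verifying that a certain family of functions, parameterized by the \emph{rank} $k$ (the maximum number of events a single random variable affects), is convex on a prescribed domain. Their paper settles the cases $k \leq 3$ by computing an explicit analytical description of these functions and inspecting them directly, and they prove that handling all $k$ suffices to remove the rank restriction. So the entire task reduces to establishing convexity for arbitrarily large $k$, on the same domain the BMU framework prescribes.

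First I would unpack the BMU setup to identify, precisely, what these functions compute. They arise from a local rounding/slack argument on the dependency graph, so I expect each $f_k$ to admit a representation as a weighted sum over combinatorial objects (partitions, subsets, or ``configurations'' of the $k$ events sharing a variable), with weights that are monomials or simple rational expressions in the coordinates. My goal would be to rewrite $f_k$ in such a combinatorial form that convexity becomes manifest from closure properties: nonnegative linear combinations preserve convexity, compositions of affine maps with convex functions are convex, and pointwise suprema of affine functions are convex. If the combinatorial expansion can be made to sit inside one of these classes, the analytical computation that blocked BMU beyond $k = 3$ is avoided entirely.

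A complementary route, which I would pursue in parallel, is induction on $k$. The natural recursion is to condition on what happens at one distinguished event among the $k$ affected by the variable, expressing $f_k$ as a combination of translates/restrictions of $f_{k-1}$ (and perhaps lower-rank siblings). If the recursion takes the form $f_k = \sum_j \alpha_j \cdot g_j(f_{k-1})$ with nonnegative $\alpha_j$ and convexity-preserving $g_j$, induction closes immediately. The hard part will be choosing the recursion (or the combinatorial encoding) so that the convexity-preserving structure survives — closed-form descriptions blow up rapidly in $k$, which is exactly why the analytic approach stalled, and any decomposition that tracks too much numerical information will reintroduce the same obstruction. Once convexity is verified for every $k$, feeding this back into the BMU reduction yields \Cref{conj:bmu} with no further work.
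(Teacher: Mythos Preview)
Your proposal is not a proof but a plan, and the two concrete strategies you outline are precisely the ones that are obstructed. The paper states explicitly that no closed-form expression for $f_r$ is known for any $r>3$, and that none of the natural approaches for extracting $f_3$ from the set description appears to generalize. So hoping to rewrite $f_k$ as a nonnegative combination of convex building blocks, or to find a convexity-preserving recursion $f_k \rightsquigarrow f_{k-1}$, presupposes analytic access to $f_k$ that simply is not available. You have identified the reduction correctly, but you have not supplied the missing idea; you have restated the obstacle.

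What the paper actually does is abandon the function $f_r$ entirely and work instead with the combinatorial description of the set $\srep^{(r)}$ of representable tuples (those obtainable from a generator $(a_{ij})$ with $a_{ij}+a_{ji}\le 1$). Convexity of $\snon^{(r)} = [0,1]^r\setminus\srep^{(r)}$ is then proved via the locally-supporting-hyperplane criterion: for each maximal representable tuple $a$ with generator $(a_{ij})$, one defines \emph{movement vectors} $w_{ij}$ recording the first-order change in $a$ when $a_{ij}$ is increased and $a_{ji}$ decreased by the same amount. The affine span $H_a$ of these $\Theta(r^2)$ vectors is shown to be exactly $(r-1)$-dimensional (maximality rules out $\mathbf{1}\in H_a-a$). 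The crucial step, which has no analogue in your outline, is a combinatorial sign lemma: any $a'\in H_a$ can be written as $a+\sum \alpha_{ij}w_{ij}$ so that, for every coordinate $k$, all contributions $\alpha_{ij}(w_{ij})_k$ have the same sign. This forces every quadratic cross-term in the perturbed generator to be nonnegative, so the perturbed generator produces a tuple dominating $a'$, hence $a'$ is representable and $H_a$ is locally supporting. An induction on $r$ then lifts convexity from the interior to all of $\snon^{(r)}$. None of this requires knowing $f_r$.
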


\subsection{Contributions and Techniques}

In this work, we prove \cref{conj:bmu}, by providing such a deterministic algorithm.
This gives a first (unrestricted) answer to the aforementioned question about the relation between the LLL criterion and the complexity of the LLL:
a sharp transition occurs at the criterion $p2^d < 1$, where the complexity of the LLL drops from $\Omega(\log \log n)$ randomized, resp.\ $\Omega(\log n)$ deterministic, to $O(d^2 + \log^* n)$.
Moreover, our upper bound is tight on bounded-degree graphs due to the $\Omega(\log^* n)$ lower bound by Chung, Pettie and Su \cite{SuLLL2017}.
Finally, as is the nature of upper bounds for the LLL, our result immediately implies the same upper bound for all problems that can be phrased as an LLL problem with criterion $p2^d < 1$, such as certain hypergraph edge-coloring problems or orientation problems in hypergraphs (see \cite{brandt2019sharp}).

\label{subsec:techniques}
\paragraph{Previous Techniques}
Our work builds on techniques developed in \cite{brandt2019sharp}.
In their work, Brandt, Maus and Uitto obtain an $O(d^2 + \log^* n)$-round LLL algorithm under the criterion $p2^d < 1$ for the case of variables of rank at most $3$. In the following, we give an informal overview of their approach. 

The basic idea of the algorithm is to go sequentially through all variables and fix them to some values one by one while preserving a certain invariant that makes sure that the final assignment avoids all events.
In order to define the invariant, each edge of the dependency graph is assigned two non-negative values, one for each endpoint of the edge, that sum up to at most $2$.  
When fixing a random variable, the algorithm is also allowed to change these ``book-keeping" values.
The invariant now states that for any node $v$ in the dependency graph, the product of the $\deg(v)$ values around $v$ multiplied by $p$ is an upper bound for the conditional probability of the event $\evt_v$ associated with node $v$ to occur (where we naturally condition on the already-fixed random variables being fixed as prescribed by the (partial) value assignments performed by the algorithm so far).
If this invariant is preserved, then, after all variables are fixed, each event $\evt_v$ occurs with probability at most $2^{\deg(v)} \cdot p \leq p2^d < 1$, and therefore with probability $0$, as desired.

Brandt, Maus and Uitto do not only show that such a sequential process preserving the invariant at all times exists (even if the order in which the random variables have to be fixed are chosen adversarially), but also that it can be made to work in a local manner: in order to fix a random variable, the algorithm only needs to know the random variables and edge values in a small local neighborhood.
This allows to process random variables that affect events that are sufficiently far from each other in the dependency graph in parallel.
By adding an $O(\log^* n)$-round preprocessing step to the algorithm where a $2$-hop node coloring with $O(d^2)$ colors is computed in the dependency graph, the sequential fixing process can then be parallelized by iterating through the color classes in a standard way, yielding the desired runtime of $O(d^2 + \log^* n)$ rounds.
We will provide a more detailed overview of the algorithm from \cite{brandt2019sharp} in \cref{sec:prevalgo}.

The crucial, and rather surprising, observation making the algorithm work is that in each step in which a random variable is fixed, the existence of a value for that random variable that preserves the invariant is guaranteed if a certain function is shown to be convex on some domain.
Hence, proving the existence of the desired algorithm is reduced to solving an analytical problem for a fixed function $f$, providing a very intriguing connection between distributed algorithms and analysis.
To be precise, Brandt, Maus und Uitto show that for any integer $r \geq 2$, there is a fixed function $f_r : D \rightarrow \R$ on some domain $D \subset \R^{r-1}$ satisfying the following property: if $f_r$ is convex, then for any rank-$r$ random variable, there is a value that this variable can be fixed to such that the invariant is preserved.
By proving the convexity of $f_3(a,b) = 4 + 1/2 \cdot (ab - 2a - 2b - \sqrt{ab(4 - a)(4 - b)})$, they prove the desired upper bound for the case of variables of rank at most $3$.\footnote{Taking care of the case of rank-$1$ and rank-$2$ variables is comparably easy.}

One of the main problems with extending this proof to arbitrary ranks is that the function is only given in an indirect way, by a characterization of the set of points in $\R^r$ that lie below and on the function.
No closed-form expression describing $f_r$ is known for any $r > 3$, and the relatively compact form of the function for the case $r = 3$ is arguably due to the cancellation of certain terms that do not cancel out in higher dimensions.
In fact, none of the ways to obtain $f_3$ from the characterization of the mentioned point set seems to yield \emph{any} closed-form expression if adapted to higher dimensions, and even if a closed-form expression for all $f_r$ were found in some way, it is far from clear that proving convexity of these functions would be feasible.

\paragraph{New Techniques}
We overcome this obstacle by showing that, perhaps surprisingly, even without any analytical access to the functions $f_r$, we can infer their convexity for all $r$.
In the following we give an informal overview of our approach.
Our main idea is to prove convexity of $f_r$---or equivalently, convexity of the set bounded by $f_r$ from below--- by finding a so-called locally supporting hyperplane for each point $q$ on $f_r$.
More precisely, for each such $q$, we want to find a number of vectors such that the following two properties hold:
\begin{enumerate}
    \item\label{item:hyperplane} The affine subspace of $\R^r$ spanned by the vectors and containing $q$ is a hyperplane, i.e., an affine subspace of dimension $r-1$.
    \item\label{item:ball} In an $\varepsilon$-ball around $q$, the hyperplane is contained in the set consisting of all points on and below $f_r$.
\end{enumerate}
These properties ensure convexity of $f_r$ in $q$; however, a priori it is completely unclear how to find such vectors.
In order to obtain these vectors, we consider the combinatorial description of the points on and below $f_r$ that is tightly connected to the aforementioned invariant:
Consider a hyperedge of rank $r$ and write two non-negative values that sum up to at most $2$ on each edge of the skeleton of the hyperedge (i.e., a clique induced by the hyperedge) one value for each endpoint of the edge.
For each endpoint of the hyperedge multiply the $r-1$ values belonging to the endpoint, and consider the $r$-dimensional vector obtained by collecting the resulting products.
The points that can be generated in this way are exactly the (non-negative) points that lie on or below $f_r$.

For each such point $q'$, call the tuple of the $\Theta(r^2)$ values written on the edges that generate $q'$ in the above description a generator of $q'$; a point can have (and usually has) more than one generator.
Roughly speaking, we find the desired vectors for a point $q$ by picking an arbitrary generator and, for each edge $e$ in the skeleton of the hyperedge, computing the vector by which $q$ changes if we subtract some small $\varepsilon$ from one value on $e$ and add it to the other.
A crucial insight is that it is fine to pick such a large set of $\Theta(r^2) \gg r - 1$ vectors: due to the specific construction, one can show that the affine subspace spanned by these $\Theta(r^2)$ vectors and containing $q$ is $(r-1)$-dimensional.
Moreover, the redundancy contained in this choice enables us to prove \cref{item:ball} by finding, for each $q'$ on the hyperplane in an $\varepsilon$-ball around $q$, a way to write $q' - q$ as a linear combination of $r - 1$ of these vectors that satisfies certain desirable properties.

Note that we will use terminology that does not refer to the convexity of the function $f_r$ as we do not make use of this function from an analytical perspective. 
Instead, we will aim for the equivalent goal of showing that the set bounded by $f_r$ from below is convex, by making use of its combinatorial description. 

\subsection{Further Related Work}\label{sec:related}
Following the resampling approach of Moser and Tardos \cite{MoserTardos10}, many of the results for the distributed LLL were based on randomized algorithms.
The bounds given in the following hold w.h.p.
The bound of $O(\log^2 n)$ for the algorithm by Moser and Tardos \cite{MoserTardos10} is due to $O(\log n)$ steps in which variables are resampled, where in each step a maximal independent set (MIS) is computed in $O(\log n)$ rounds in order to perform the resampling in a conflict-free manner.
By showing that a weaker variant of an MIS is sufficient for this purpose, Chung, Pettie and Su \cite{SuLLL2017} obtained an upper bound of $O(\log^2 d \log n)$, for the same LLL criterion $ep(d+1) < 1$.
In turn, by improving the computation of such a weak MIS from $O(\log^2 d)$ to $O(\log d)$, Ghaffari \cite{GhaffariImproved16} improved this bound to $O(\log d \log n)$.

In the aforementioned work, Chung, Pettie and Su also showed that faster algorithms can be obtained if the LLL criterion is strengthened: under the criterion $epd^2 < 1$, they provide an algorithm running in time $O(\log n)$, and under an exponential criterion, i.e., a criterion of the form $pf(d) < 1$ where $f(d)$ is exponential in $d$, they give an upper bound of $O(\log n / \log \log n)$.
For LLL instances with $d \in O(\log^{1/5} \log n)$, Fischer and Ghaffari \cite{ManuelaLLL17} provided a $2^{O(\sqrt{\log \log n})}$-round algorithm under the criterion $p(ed)^{32} < 1$.
Ghaffari, Harris and Kuhn \cite{newHypergraphMatching} improved on this result by showing that for any integer $i \geq 1$, there is an LLL algorithm running in time $\operatorname{exp}^{(i)}(O(\log d + \sqrt{\log^{(i+1)} n}))$ under the criterion $20000pd^8 \leq 1$, where $\operatorname{exp}^{(i)}$ and $\log^{(i)}$ represent a power tower and the iterated logarithm, respectively.
Finally, Rozhon and Ghaffari \cite{Rozhon19} proved, as one of the many implications of their recent breakthrough in computing network decompositions, that on bounded-degree graphs a variable assignment avoiding all events can be found in $O(\poly \log \log n)$ rounds under the criterion $pd^{10} < 1$, closing in on a conjecture by Chang and Pettie \cite{ChangHierarchy17} stating that $O(\log \log n)$ rounds are sufficient.

The latter three works \cite{ManuelaLLL17,newHypergraphMatching,Rozhon19} also provide the first non-trivial \emph{deterministic} algorithms for the distributed LLL. The currently best known upper bound by Rozhon and Ghaffari \cite{Rozhon19} (for polynomial criteria) states that $\poly \log n$ rounds suffice under the criterion $epd(1+\eps) < 1$, for any constant $\eps > 0$.

\section{Preliminaries}
\subsection{Model}\label{sec:model}
The model in which we study the LLL is the LOCAL model of distributed computing \cite{linial92,peleg00}.
In the LOCAL model, we usually want to solve a graph problem, but unlike in centralized computation, the actual computation is performed by the nodes of the input graph.
To this end, each node of the input graph is considered as a computational entity, and each edge as a communication link over which the entities can communicate.
The computation proceeds in synchronous rounds, where in each round two things happen: first, each node sends an arbitrarily large message to each of its neighbors and then, after the messages have arrived, each node can perform an arbitrarily complex internal computation.
Each node has to decide at some point that it terminates and then it must output its local part of the global solution to the given problem---in the case of the LLL problem this local part is the values of all random variables the associated event depends on.
The runtime of a distributed algorithm is the number of rounds until the last node terminates.

\subsection{The Reduction}\label{sec:prevalgo}
In this section, we will give a detailed explanation of the argumentation presented in \cite{brandt2019sharp} that reduces proving the existence of an $O(d^2 + \log^* n)$-round distributed deterministic LLL algorithm under the criterion $p2^d < 1$ to showing that a certain family of sets or functions is convex.
The blueprint for such an algorithm $\alg$ is given as follows.

Consider an instance of the LLL, given by a set $\{ X_1, \dots, X_m \}$ of mutually independent random variables and a set of events that depend on the random variables.
Consider the dependency graph $G = (V, E)$ of this instance, and denote the event associated with a vertex $v$ by $\evt_v$, and the maximum degree of $G$ by $d$.
Let $p$ be a parameter such that each event occurs with probability at most $p$, and assume that $p2^d < 1$, i.e., fix the LLL criterion to $p2^d < 1$.
As any two events that depend on the same variable are neighbors of each other in $G$, we can create for each random variable $X_i$ a hyperedge that has the nodes $v$ such that $\evt_v$ depends on $X_i$ as endpoints.
Technically, the hyperedges are not part of $G$, but for simplicity, we might consider them as such.

Algorithm $\alg$ starts by computing a $2$-hop coloring with $O(d^2)$ colors in $\tilde{O}(d) + O(\log^* n)$ rounds, by applying the coloring algorithm by Fraigniaud, Heinrich and Kosowski \cite{fraigniaud16} to $G^2$, i.e., to the graph obtained by connecting any two nodes of distance at most $2$ in $G$ by an edge.
Then, it iterates through the colors one by one, and each time a color $c$ is processed, each node $v$ of color $c$ fixes each incident random variable (i.e., each random variable whose corresponding hyperedge is incident to $v$) that has not been fixed so far.
We will see that in order to fix all incident random variables of a node in a suitable way, $O(1)$ rounds suffice, and as there are $O(d^2)$ colors, algorithm $\alg$ runs in $O(d^2 + \log^* n)$ rounds.

The challenging part is to fix the random variables in a manner such that the produced final assignment is correct, i.e., such that none of the events occurs under the assignment.
To this end, during the fixing process the authors keep track of, roughly speaking, how favorable or unfavorable the variable fixings performed so far were for the nodes (regarding avoiding the associated event), by assigning two values to each edge.
More precisely, they assign a non-negative value $\varphi_e^v$ to each pair $(e, v) \in E \times V$ for which $e$ is incident to $v$.
We can imagine the two values $\varphi_e^u$ and $\varphi_e^v$ to be written on edge $e$; each time a random variable $X_i$ is fixed by a node, the node also updates the values that are written on the edges in the skeleton of the hyperedge corresponding to $X_i$.

The purpose of these edge values w.r.t. obtaining a correct output in the end of the process is to define a property $P^*$ that is kept as an invariant during the fixing process and guarantees that the final assignment avoids all events.
Consider an arbitrary point in the fixing process where some random variables $X_1, \dots, X_\ell$ already have been fixed to some values $x_1, \dots, x_\ell$, respectively.
Property $P^*$ is satisfied if the following two conditions hold.
\begin{enumerate}
    \item $\varphi_e^u + \varphi_e^v \leq 2$ for each edge $e = \{ u, v \}$.
    \item $\prob{\evt_v \mid X_1 = x_1, \dots, X_\ell = x_\ell} \leq p \cdot \prod_{e \ni v} \varphi_e^v $ for each node $v$.
\end{enumerate}
If Property $P^*$ is satisfied when all variables have been fixed, then for each event $\evt_v$ we have a bound of $p \cdot \prod_{e \ni v} \varphi_e^v \leq p2^d < 1$ for the probability that $\evt_v$ occurs, which implies that $\evt_v$ does not occur since the probability of it occurring can only be $0$ or $1$.
By initializing each value $\varphi_e^v$ to $1$, the authors make sure that $P^*$ is satisfied when algorithm $\alg$ starts.
The crucial insight in \cite{brandt2019sharp} is that there is always a way to preserve Property $P^*$ each time a random variable is fixed if a certain function or set is convex.
For the precise statement, the authors introduce the notion of a representable triple.
\begin{definition}[Definition 3.3 of \cite{brandt2019sharp}]
    A triple $(a,b,c) \in \R_{\geq 0}^3$ is called \emph{representable} if there are values $a_1, a_2, b_1, b_3, c_2, c_3 \in [0,2]$ such that $a_1 \cdot a_2 = a$, $b_1 \cdot b_3 = b$, $c_2 \cdot c_3 = c$, $a_1 + b_1 \leq 2$, $a_2 + c_2 \leq 2$, and $b_3 + c_3 \leq 2$.
Let $S_{\operatorname{rep}}=\{(a,b,c)\in \R_{\geq 0}^3 \mid (a,b,c)\text{ is representable}\}$ denote the set of all representable triples.
\end{definition}
Using this definition, the authors prove the following statement for the case of rank-$3$ random variables (which we give in a reformulated version using the notion of convexity instead of the concept of ``incurvedness" used in \cite{brandt2019sharp}).

If $[0,2]^3 \setminus S_{\operatorname{rep}}$ is a convex set, then there is a way to fix any given random variable $X_i$ of rank at most $3$ at any point in time during the algorithm (or, more generally, for any arbitrary fixing of already fixed random variables such that Property $P^*$ is satisfied) such that Property $P^*$ is preserved.
Moreover, the only information required to fix $X_i$ is the set of values $\varphi_e^v$ written on the edges $e$ that belong to the skeleton of the hyperedge corresponding to $X_i$.
We refer to \cite[Section 3.3]{brandt2019sharp} for the details of the proof.

Hence, in algorithm $\alg$, each node $v$ that has the task to fix all its incident unfixed random variables can simply collect all edge values written on edges between nodes in its inclusive $1$-hop neighborhood, and then go through its incident random variables one by one, each time finding a value for the random variable in question that preserves Property $P^*$.
As the sequential fixing does not require any communication after obtaining the required edge values, fixing \emph{all} incident unfixed variables of a node can be done in $O(1)$ rounds.
Moreover the local nature of $P^*$ and the fact that the set of edge values required and rewritten by a node during the fixing does not intersect with the set of analogous edge values for a node in distance at least $3$ ensures that any two nodes with the same color in the computed $2$-hop coloring can perform the variable fixing in parallel.
This concludes the description of the reduction.

As already noted by the authors, the definitions and proofs (for the reduction to the convexity statement) generalize straightforwardly to the case of random variables of arbitrary rank. However, showing that the convexity of the respective set indeed holds for higher dimensions remained unanswered in \cite{brandt2019sharp}; and indeed, even given our resolution, it remains unclear and would be interesting to see whether their analytical approach can feasibly be extended to higher dimensions than $3$. 
To be precise, their approach extends in the following way: to prove the existence of the deterministic algorithm in the case that each random variable affects at most $r$ events, it suffices to prove that the set $ \snon^{(r)} := [0,1]^r \setminus \srep^{(r)}$ is convex, where $\srep^{(r)}$ is the set of all representable tuples, which are tuples that can be generated by some generator, as defined below. 
\begin{definition}[generator]
\label{def:generator}
We call a vector $(a_{ij})_{i \neq j \in [r]}$ with $r(r-1)$ coordinates a \emph{generator} if for each $i\not=j$ we have $0 \leq a_{ij} \leq 1$ and $a_{ij} + a_{ji} \le 1$. The generator  $(a_{ij})_{i \neq j \in [r]}$ generates the $r$-dimensional tuple $(a_1, \ldots, a_r)$ with $a_i = \prod_{j \in [r] \setminus \{i\}}a_{ij}$ for $i \in [r]$.
We call a generator \emph{non-zero}, if none of its coordinates is $0$. 
We use a shorthand notation and denote the generator $(a_{ij})_{i \neq j \in [r]}$ simply as $(a_{ij})$.
\end{definition}
Note that if $(a_{ij})$ is a non-zero generator, then $a_{ij} < 1$ for each $i \neq j \in [r]$.

\begin{definition}[representable tuples]
A tuple $(a_1, \ldots, a_r) \in \mathbb{R}^r_{\geq 0}$ is called representable if there exists a generator $(a_{ij})$ that generates it. Let $\srep^{(r)} = \{(a_1, \ldots, a_r) \in \mathbb{R}^r_{\geq 0}|(a_1, \ldots, a_r) \text{ is representable }\}$ denote the set of all representable tuples.
\end{definition}
Note that $\srep^{(3)} \neq \srep$, as we require $a_{ij} + a_{ji} \leq 1$ instead of $a_{ij} + a_{ji} \leq 2$. We consider this scaled version, as this makes the proof cleaner later on: note that $[0,1]^3 \setminus \srep^{(3)}$ being convex directly implies that $[0,2]^3 \setminus \srep$ is convex as the latter is just a scaled variant of the former set. 
In the following, we drop the superscripts when clear from context and we denote with $\srep$ the set of representable tuples with respect to the scaled down version and $\snon$ as the set of points in $[0,1]^r$ which are not representable.
Our main contribution is the proof of the following theorem. 
\begin{theorem}
\label{thm:main}
For every $r \geq 2$, $\snon^{(r)}$ is convex.
\end{theorem}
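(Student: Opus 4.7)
The plan is to prove convexity of $\snon^{(r)}$ through a local supporting-hyperplane argument. For every boundary point $q$ of $\srep^{(r)}$ in the interior of $[0,1]^r$, I would exhibit a hyperplane $H \ni q$ and an $\varepsilon > 0$ with $H \cap B_\varepsilon(q) \subseteq \overline{\srep^{(r)}}$. Using the obvious coordinatewise monotonicity of $\srep^{(r)}$, the existence of such local supports at every boundary point upgrades, via standard convex-geometric reasoning, to the global statement that $\snon^{(r)}$ is convex. Boundary points of the cube itself (those with some coordinate equal to $0$) can be handled separately by continuity or by an easy dimension-reduction to a lower-rank instance.

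To build $H$, I would pick a non-zero generator $(a_{ij})$ of $q$ and associate to each ordered pair $i \neq j$ the \emph{shift vector} $v_{ij} := (a_i/a_{ij})\,e_i - (a_j/a_{ji})\,e_j \in \R^r$, which records the first-order change in the generated tuple under the perturbation $a_{ij} \mapsto a_{ij} + \delta$, $a_{ji} \mapsto a_{ji} - \delta$ (an admissible direction since it preserves $a_{ij}+a_{ji}$). Set $H := q + \operatorname{span}\{v_{ij}\}$. A direct linear-algebra check shows that $\operatorname{span}\{v_{ij}\}$ has codimension exactly $1$ in $\R^r$ precisely when a multiplicative cycle identity holds on the generator---e.g., $a_{12}a_{23}a_{31} = a_{21}a_{13}a_{32}$ for $r = 3$, and analogs around every cycle of the skeleton in general. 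I expect such cycle identities to be automatically satisfied at an appropriately chosen generator of any boundary point $q$: geometrically, lying on the boundary of the image of the generator-to-tuple map forces the differential of that map to drop rank, which is exactly the cycle consistency condition. Making this precise, via an extremal characterization of ``boundary'' generators, is the first technical step.

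The main obstacle is the local-containment step $H \cap B_\varepsilon(q) \subseteq \overline{\srep^{(r)}}$. For $q' \in H$ near $q$, writing $q' - q = \sum c_{ij} v_{ij}$ and then perturbing the generator by $a_{ij} \mapsto a_{ij} + c_{ij}$ only matches $q'$ to first order; quadratic errors from the multiplicative product structure have to be absorbed. My plan is to leverage the redundancy $\binom{r}{2} \gg r - 1$ in the choice of the $c_{ij}$: for each $q'$, pick a structured representation supported on the edges of a spanning tree of the skeleton, and construct an \emph{exact} generator of $q'$ by an iterative leaf-to-root adjustment along this tree, absorbing higher-order residuals into the next edge's adjustment. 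The delicate subpoint, where I expect the bulk of the work, is feasibility: the iterated generator must remain in $[0,1]^{r(r-1)}$ and satisfy $a_{ij}' + a_{ji}' \leq 1$ on every edge. On edges where the original generator sits on the boundary $a_{ij}+a_{ji}=1$, the perturbation is one-sided, and I must show that the shift direction forced by the geometry of $q'$ inside $H$ is the inward one---a sign match that should follow from the local-support intuition but whose rigorous verification is the crux of the argument.
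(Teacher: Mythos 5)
Your overall architecture matches the paper's: shift vectors $v_{ij}$ obtained from sum-preserving perturbations of a generator, the affine span $H$ of these vectors as the candidate hyperplane, codimension one forced by the boundary/maximality of $q$, and a local-containment step showing nearby points of $H$ are representable. The codimension-one step you sketch is sound in spirit (the paper proves it by showing that if $H$ contained $q+\mathbf{1}$, one could perturb the generator to produce a representable tuple strictly dominating $q$, contradicting maximality), and the cube-boundary cases are indeed handled by dimension reduction, though note you also need the case of a coordinate equal to $1$, and the "standard convex-geometric reasoning" you invoke is a real theorem about \emph{open connected} sets, so you must separately establish path-connectedness of the interior of $\snon$ and then pass from the interior to the closed set (the paper does this by an induction on $r$).

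The genuine gap is in the local-containment step, and you have misidentified its crux. With sum-preserving shifts $a_{ij}\mapsto a_{ij}+c_{ij}$, $a_{ji}\mapsto a_{ji}-c_{ij}$ the constraint $a_{ij}+a_{ji}\le 1$ is preserved \emph{exactly}, and since a non-zero generator of a point in $(0,1)^r$ has all entries in $(0,1)$, feasibility of the perturbed generator is automatic for small perturbations; the "one-sided perturbation on tight edges" issue you flag as the crux is a non-issue. The real difficulty is the sign of the second-order correction: perturbing the generator produces a tuple equal to $q' = q+\sum c_{ij}v_{ij}$ only up to quadratic and higher terms, and one needs these correction terms to be \emph{non-negative in every coordinate} so that the generated tuple dominates $q'$ and monotonicity (downward closure of $\srep$) finishes. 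A fixed spanning tree does not give this: already for $r=3$ with the star at vertex $1$, the correction to the first coordinate is proportional to $c_{12}c_{13}$, which is negative for directions requiring opposite signs, and your leaf-to-root exact-matching scheme exhausts all $r-1$ degrees of freedom before reaching the root, leaving the root coordinate's domination unverified --- proving it would require exactly the analysis you are trying to avoid. The paper's key lemma (its Lemma 5.5) resolves this by exploiting the full redundancy of all $\binom{r}{2}$ edges: via an exchange argument it selects coefficients $\alpha_{ij}$ representing $q'-q$ such that for each coordinate $k$ all contributions $\alpha_{ij}(v_{ij})_k$ share one sign; this forces every quadratic term in the product expansion to be non-negative and lets the higher-order terms be absorbed for small $\varepsilon$. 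Without this (or an equivalent) selection principle, your argument does not close.
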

This settles \cref{conj:bmu} as described above. 

\section{Proving that $\snon$ is convex}\label{sec:proof}

In this section we prove that set $\snon$ is convex, omitting two longer proofs that are postponed to \cref{sec:technical} and \cref{sec:hyperplane}. 

\subsection{Notation}
We work with the standard Euclidean space $\mathbb{R}^m$ where distances are measured with the Euclidean norm; $\textbf{0}$ and $\textbf{1}$ denote the vectors $(0,0,\dots,0)^T$ and $(1,1,\dots,1)^T$, respectively. 
We define $B(x,R) := \{ y \in \mathbb{R}^m, \Vert x - y \Vert \le R\}$ as the closed ball around $x$ with radius $R$. A subset $S \subseteq \mathbb{R}^m$ is open if for any $x \in S$, there exists $R > 0$ such that $B(x,R) \subseteq S$. A subset $S \subseteq \mathbb{R}^m$ is closed if $\mathbb{R}^m \setminus S$ is open. A set $S \subseteq \mathbb{R}^m$ is bounded if there exists $R >0$ such that $S \subseteq B(\mathbf{0}, R)$. A set $S$ is compact if it is closed and bounded. Equivalently, $S$ is compact if every sequence $x_1, x_2, \dots $ with each $x_i \in S$ has a subsequence $x_{s(i)}$ that converges to some $x \in S$. The subset $[0,1]^m \subseteq \mathbb{R}^m$ is compact. 
The interior of a set $S$ is an open subset of $S$ and defined as $S^\textrm{o} = \{ x\in S, \exists R>0: B(x,R) \subseteq S\}$. 
The boundary of a set $S$ is defined as $\partial S = \{ x \in \mathbb{R}^m,  \forall R > 0 : B(x,R)\cap S \not= \emptyset \text{ and } B(x,R)\cap (\mathbb{R}^m\setminus S) \not= \emptyset \}$. 
A set $S$ is path-connected if for any $x,y \in S$ there exists a continuous function $f: [0,1] \rightarrow S$ such that $f(0) = x$ and $f(1) = y$. 

A hyperplane $H \subset \mathbb{R}^m$ is an affine subspace of dimension $m-1$. Equivalently, it is a set of points $H = \{x\in \mathbb{R}^m, h^T x = b\}$ for some vector $h \in \mathbb{R}^m \setminus \{\mathbf{0}\}$ and $b \in \mathbb{R}$.
A weakly supporting hyperplane for $S$ intersecting $y \in \partial S$ is a hyperplane $H = \{x\in \mathbb{R}^m, h^T x = b\}$ with $h^T y = b$ and $h^T z \geq b$ for any $z \in S$. 
Finally, a weakly locally supporting hyperplane for $S$ intersecting $y \in \partial S$ is a hyperplane $H = \{x\in \mathbb{R}^m, h^T x = b\}$ with $h^Ty = b$ satisfying the following property: there exists an $\eps > 0$ such that for any $z \in S \cap B(y, \eps)$ we have $h^T z \ge b$.

\subsection{Proof}

Convexity of a set can be verified in several equivalent ways. 
As we outlined in \cref{subsec:techniques}, we rely on the ``supporting hyperplane formulation'', i.e., a set is convex if for each boundary point we can find a hyperplane such that the whole set lies on one side of the hyperplane. 
Moreover, for connected sets, it is enough to prove that each such hyperplane is ``locally'' supporting as formalized in the following theorem, which is stated in a more general form in  \cite{valentine} (Theorem 4.10 there). 

\begin{theorem}
\label{theorem:book}
Let $S \subseteq \mathbb{R}^r$ be an open and path-connected set in $\mathbb{R}^r$. 
The set $S \subseteq \mathbb{R}^r$ is convex if for every point $y$ contained in the boundary of $S$, there exists a weakly locally supporting hyperplane with respect to $S$ going through $y$. 
\end{theorem}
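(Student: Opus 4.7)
The plan is to proceed by contradiction, using a shortest-path argument inside $\overline{S}$. Assume $S$ is not convex; then there exist $p, q \in S$ with $[p, q] \not\subseteq S$. Since $S$, and hence $\overline{S}$, is path-connected, rectifiable paths from $p$ to $q$ inside $\overline{S}$ exist. Applying Arzel\`a--Ascoli to unit-speed parametrizations of paths of bounded length lying in a bounded region, I would extract a shortest such path $\gamma : [0, L] \to \overline{S}$.

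The central step I would then prove is that $\gamma$ must be the straight segment $[p, q]$. Suppose instead $\gamma$ bends at some $y = \gamma(t_0)$, with distinct incoming and outgoing unit tangents $v_1 \neq v_2$. If $y \in S$, openness provides a ball $B(y, \delta) \subseteq S$, so the bent piece of $\gamma$ inside $B(y, \delta)$ can be replaced by its chord (still in $S \subseteq \overline{S}$), strictly shortening $\gamma$ and contradicting minimality. If $y \in \partial S$, I would invoke the weakly locally supporting hyperplane $H = \{x : h^T x = b\}$ at $y$: the containment $S \cap B(y, \eps) \subseteq \{x : h^T x \geq b\}$ transfers (by taking closures inside the open ball) to $\overline{S} \cap B(y, \eps) \subseteq \{x : h^T x \geq b\}$. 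Plugging $\gamma$ into this half-space inequality and expanding to first order around $t_0$ forces $h^T v_1 \leq 0 \leq h^T v_2$. A careful local construction then produces a strictly shorter path in $\overline{S}$ between $\gamma(t_0 - s)$ and $\gamma(t_0 + s)$, again contradicting minimality. We conclude $[p, q] \subseteq \overline{S}$.

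To upgrade $[p, q] \subseteq \overline{S}$ to the stronger $[p, q] \subseteq S$, suppose some $y' \in [p, q] \cap \partial S$ exists. The weakly locally supporting hyperplane at $y'$ confines points of $S$ near $y'$ to a closed half-space, while both $p$ and $q$ lie on the segment through $y'$; a short computation using these data pins down the position of $p$ and $q$ relative to the hyperplane. Openness of $S$ at $p$ (or $q$) then yields $S$-points strictly on the ``forbidden'' side of the hyperplane, and path-connectedness of $S$ links these to points near $y'$, producing a continuous crossing of the hyperplane by a path in $S$ arbitrarily close to $y'$, which directly contradicts the local containment. Hence $[p, q] \subseteq S$, the sought contradiction.

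The hard part will be the ``bend on $\partial S$'' case above: a weakly locally supporting hyperplane places $\overline{S}$ in a closed half-space only locally, and does \emph{not} imply that $\overline{S}$ is locally convex near $y$, so the naive straight chord shortcut may leave $\overline{S}$. The fix is a Tietze--Nakajima-style construction, in which the shortcut is replaced by a path that tracks $\partial S$ closely enough for the local half-space containment to keep it inside $\overline{S}$, while $v_1 \neq v_2$ still makes it strictly shorter than the corresponding arc of $\gamma$. This is the technical core of Valentine's Theorem 4.10 and is the step I expect to require the most care.
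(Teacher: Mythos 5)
The paper does not actually prove this statement; it cites it as a special case of Theorem 4.10 in Valentine's book, so there is no in-paper argument to compare against, and your proposal must stand on its own. It does not: it has a genuine gap precisely at the step you yourself flag as the technical core, and that step is the entire content of the theorem. The hypothesis at a boundary point $y$ is only the one-sided containment $S \cap B(y,\eps) \subseteq \{x : h^Tx \geq b\}$; it restricts where $S$ may lie near $y$ but supplies no points that are guaranteed to belong to $S$ or $\overline{S}$ near $y$. The Tietze--Nakajima shortcut you invoke (replacing the bent arc by its chord, or by a path ``tracking $\partial S$'') is licensed by \emph{local convexity} of $S$, a strictly stronger hypothesis that is not available here: nothing in the local half-space containment tells you that the chord between $\gamma(t_0-s)$ and $\gamma(t_0+s)$, or any nearby shorter path, stays in $\overline{S}$. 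Deriving $h^Tv_1 \leq 0 \leq h^Tv_2$ is correct but does not by itself yield a shorter admissible path, so the contradiction in the case ``$\gamma$ bends on $\partial S$'' is asserted rather than proved.

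Two further gaps. First, a shortest path in a general closed subset of $\mathbb{R}^r$ need not have one-sided tangents at every point, so ``$\gamma$ is not a straight segment'' does not reduce to ``there is a bend point with distinct unit tangents $v_1 \neq v_2$''; that reduction itself requires an argument. Second, the upgrade from $[p,q]\subseteq\overline{S}$ to $[p,q]\subseteq S$ is incomplete as described: from the local containment at $y'\in[p,q]\cap\partial S$ one gets that the affine function $h^Tx$ is $\geq b$ on $[p,q]$ near $y'$ and equals $b$ at the interior point $y'$, hence the entire segment, including $p$ and $q$, lies in the hyperplane; a ball around $p$ inside the open set $S$ then contains points with $h^Tx<b$, but these lie far from $y'$, and a path in $S$ joining them to $S$-points near $y'$ need only cross $\{h^Tx=b\}$ \emph{somewhere}, not inside $B(y',\eps)$, so no contradiction with the merely local support at $y'$ follows. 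In short, the geodesic skeleton is a plausible plan, but every place where the weak local-support hypothesis must actually be exploited is left open.
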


Note that \cref{theorem:book} can only be used to prove convexity of open sets and thus cannot directly applied to establish the convexity of $\snon$. Instead, we use \cref{theorem:book}  to first establish convexity of the interior of $\snon$, which is an open set and which we denote by $\isnon $. Once we have established the convexity of $\isnon $, we prove the convexity of $\snon$ by induction on the dimension $r$. To prove convexity of $\isnon $, we need to show that $\isnon $ is path-connected and that for every boundary point of $\isnon $, there exists a weakly locally supporting hyperplane going through the boundary point. We now prove the former, using the following simple observation, which will be used in several other proofs. 

\begin{observation}
\label[observation]{observation:below_maximal_is_representable}
Let $a = (a_1, \dots, a_r)$ be a representable tuple. Then any tuple $a'$ with $0 \le a'_i \le a_i$ for all $i \in [r]$ is also representable. 
\end{observation}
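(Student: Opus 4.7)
The statement is a monotonicity property: shrinking coordinates of a representable tuple keeps it representable. My plan is to exhibit an explicit modification of a generator of $a$ that yields a generator of $a'$, exploiting the fact that the coordinate $a_{ij}$ of a generator appears in the product defining only $a_i$ (not $a_j$), and is coupled through the constraints only with its ``partner'' $a_{ji}$ via $a_{ij}+a_{ji}\le 1$. So shrinking $a_{ij}$ shrinks only $a_i$ and only relaxes the single relevant constraint.

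Concretely, fix a generator $(a_{ij})$ of $a$. I would handle each coordinate $i\in [r]$ independently. If $a_i=0$, then some $a_{ij}=0$, and since $0\le a'_i\le a_i=0$ we have $a'_i=0$, so no modification is needed at $i$. Otherwise $a_i>0$, so every $a_{ij}>0$ for $j\ne i$, and we can set $\lambda_i=a'_i/a_i\in[0,1]$. Pick any fixed $j(i)\ne i$ and define a new generator $(a'_{ij})$ by
\[
a'_{i,j(i)} \;=\; \lambda_i\cdot a_{i,j(i)},\qquad a'_{ik}=a_{ik}\text{ for }k\notin\{i,j(i)\},
\]
applied simultaneously and independently for each $i$. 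Since each modification affects only an ordered-pair coordinate whose first index is $i$, the modifications for different $i$'s do not clash.

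It then remains to verify the two defining properties of a generator and the product identity. The product check is immediate: $\prod_{k\ne i}a'_{ik}=\lambda_i\prod_{k\ne i}a_{ik}=\lambda_i a_i=a'_i$. For the constraints, $a'_{ij}\in[0,1]$ is clear since we only shrink entries, and for any pair $i\ne j$ we have $a'_{ij}\le a_{ij}$ and $a'_{ji}\le a_{ji}$, hence $a'_{ij}+a'_{ji}\le a_{ij}+a_{ji}\le 1$.

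\textbf{Main obstacle.} There is essentially no obstacle once the right modification is identified; the only subtlety is to notice that $a_{ij}$ contributes to $a_i$ alone, so rescaling a single entry per index $i$ suffices to hit the target $a'_i$ without creating interference between different coordinates or violating any edge constraint. The degenerate case $a_i=0$ must be handled separately, but it is trivial because $a'_i$ is then forced to be $0$.
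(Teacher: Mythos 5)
Your proof is correct and follows essentially the same approach as the paper: rescale one entry $a_{i,j(i)}$ by $a'_i/a_i$ for each coordinate and observe that shrinking entries preserves all generator constraints. You are slightly more careful than the paper, which glosses over the degenerate case $a_i=0$ (where the rescaling factor is undefined) that you handle explicitly.
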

\begin{proof}
Consider a generator $(a_{ij})$ of $a$. 
For any $i$, pick some $j\not=i$ and set $a'_{ij} = a_{ij} \cdot \frac{a'_i}{a_i} \leq 1$. Set all other values in $(a'_{ij})$ equal to the corresponding value in $(a_{ij})$. $(a'_{ij})$ is a valid generator generating the tuple $a'$. 
\end{proof}

Now, we are ready to prove that $\isnon $ is path-connected. 

\begin{lemma}
The set $\isnon $ is path-connected. 
\end{lemma}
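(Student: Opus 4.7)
The plan is to show that $\isnon$ is path-connected by connecting every point to a fixed canonical ``high'' point $z \in \isnon$ via a two-leg monotone path. The key ingredient is \Cref{observation:below_maximal_is_representable}, which makes $\srep$ downward-closed and hence $\snon$ \emph{upward-closed} in the following sense: if $a \in \snon$ and $b \in [0,1]^r$ satisfies $b_i \geq a_i$ for every $i$, then $b \in \snon$ (otherwise $a$ would inherit representability from $b$).

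First I would identify $\isnon$ explicitly. Since $\srep$ is the image of the compact set $\{(a_{ij}) : 0 \leq a_{ij} \leq 1,\; a_{ij}+a_{ji} \leq 1\}$ under the continuous product map, it is closed, so $\snon$ is relatively open in $[0,1]^r$. Hence
\[
\isnon = \snon \cap (0,1)^r,
\]
i.e., the non-representable tuples with every coordinate strictly between $0$ and $1$. Next I would pick the canonical point $z := (1-\delta,\ldots,1-\delta)$ for a sufficiently small $\delta>0$ and verify $z \in \isnon$. For any representable tuple $a$ with generator $(a_{ij})$,
\[
\prod_{i=1}^r a_i \;=\; \prod_{i\neq j} a_{ij} \;=\; \prod_{i<j}(a_{ij}a_{ji}) \;\leq\; (1/4)^{\binom{r}{2}},
\]
using $a_{ij}a_{ji}\leq 1/4$ from $a_{ij}+a_{ji}\leq 1$. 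Thus any tuple whose coordinates all exceed $(1/2)^{r-1}$ is non-representable, and for $\delta < 1 - (1/2)^{r-1}$ a whole open ball around $z$ consists of such tuples, so $z \in \isnon$.

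Third, for an arbitrary $x \in \isnon$, I would build a path from $x$ to $z$ as the concatenation of two straight segments through the intermediate point $x \vee z := (\max(x_i, 1-\delta))_i$. On the segment from $x$ to $x\vee z$ every coordinate is monotone non-decreasing, so every point $w$ on the segment satisfies $w \geq x$ coordinatewise; upward-closure of $\snon$ gives $w \in \snon$, and since $w_i \in [x_i, \max(x_i,1-\delta)] \subset (0,1)$ we have $w \in \isnon$. On the segment from $x \vee z$ to $z$ every coordinate is monotone non-increasing but stays $\geq 1-\delta = z_i$, so every point $w$ on the segment dominates $z \in \snon$ coordinatewise and hence lies in $\snon$; again $w \in (0,1)^r$, so $w \in \isnon$.

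Finally, given any two points $x,y \in \isnon$, concatenating the path from $x$ to $z$ with the reverse of the analogous path from $y$ to $z$ yields a continuous path inside $\isnon$ from $x$ to $y$. The substantive step is pinning down $\isnon = \snon \cap (0,1)^r$ and verifying $z \in \isnon$; once these are in hand, upward-closure of $\snon$ makes the monotone path construction essentially automatic, so I do not expect any serious obstacle.
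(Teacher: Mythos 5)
Your proof is correct and rests on the same core idea as the paper's: \cref{observation:below_maximal_is_representable} makes $\snon$ upward-closed, so a two-leg monotone path through a coordinatewise maximum stays inside $\isnon$. The paper's proof is a slightly shorter instance of the same argument---it joins $u$ and $u'$ directly through $u'' = u \vee u'$ (which dominates both endpoints), avoiding your correct but unnecessary detour through the canonical point $z$ and the accompanying verification that $z$ is non-representable.
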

\begin{proof}
For any $u, u' \in \isnon $, consider the vector $u'' \in \mathbb{R}^r$ with $u''_i = \max\{u_i, u_i'\} > 0$ for every $i \in [r]$. Note that the union of the two segments between $u$ and $u''$ and between $u''$ and $u'$ is a path. Moreover, any tuple on this path is contained in $(0,1)^r$ and either dominates $u$ or $u'$. Hence, by \cref{observation:below_maximal_is_representable}, each tuple on the path is in $\isnon $. 
\end{proof}

Next, we need to understand the boundary between $\srep$ and $\snon$. To do so, it will be helpful to prove that $\srep$ is closed. As $\srep \subseteq [0,1]^r$ is bounded, this is equivalent to show that $\srep$ is compact. 

\begin{lemma}
\label{lemma:compact}
The set $\srep$ is compact. 
\end{lemma}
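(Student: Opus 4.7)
The plan is to show compactness of $\srep$ by exhibiting it as the continuous image of a compact set, namely the space of generators. Since $\srep \subseteq [0,1]^r$ is already bounded, compactness will follow once we show it is closed, and writing it as a continuous image of a compact set accomplishes both at once via the Heine--Borel theorem and the fact that continuous images of compact sets are compact.

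First I would define the set of all valid generators,
\[
G = \bigl\{ (a_{ij})_{i \neq j \in [r]} \in \mathbb{R}^{r(r-1)} \;:\; 0 \le a_{ij} \le 1 \text{ for all } i\ne j, \text{ and } a_{ij} + a_{ji} \le 1 \text{ for all } i\ne j \bigr\}.
\]
This set is bounded, being contained in $[0,1]^{r(r-1)}$, and it is closed since it is the intersection of the closed box $[0,1]^{r(r-1)}$ with finitely many closed half-spaces of the form $\{a_{ij} + a_{ji} \le 1\}$. Hence $G$ is compact.

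Next I would introduce the generation map $\Phi : G \to \mathbb{R}^r$ defined coordinatewise by $\Phi((a_{ij}))_i = \prod_{j \in [r] \setminus \{i\}} a_{ij}$ for $i \in [r]$. By \cref{def:generator}, we have $\srep = \Phi(G)$. Each coordinate of $\Phi$ is a polynomial in the entries $a_{ij}$, so $\Phi$ is continuous. Therefore $\srep = \Phi(G)$ is the continuous image of a compact set, and is itself compact.

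There is essentially no obstacle here; the lemma is a soft topological statement that follows from the definition of $\srep$ together with Heine--Borel and continuity of polynomials. The only thing one has to be slightly careful about is that the generator constraints are all non-strict inequalities, so $G$ really is closed (an off-by-one with strict inequalities would break this), and that the product structure defining $\Phi$ is genuinely a polynomial map so that continuity is immediate.
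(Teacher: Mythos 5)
Your proof is correct and takes exactly the same route as the paper: exhibit $\srep$ as the image of the compact set of generators under the continuous (polynomial) generation map. You simply spell out the compactness of the generator set and the continuity of the map, which the paper leaves implicit.
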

\begin{proof}
The set $\srep$ is defined as an image of a continuous function that maps each generator  (\cref{def:generator}) from the compact set of all generators to the corresponding representable tuple. 
Hence, it is compact as an image of a compact set under continuous function is always compact. 
\end{proof}

Next, we set up the notion of maximal tuples. 
\begin{definition}[domination and maximal tuples]
Let $a = (a_1, \dots, a_r)$ and $a' = (a'_1, \dots, a'_r)$ be two representable tuples.
We say that $a'$ \emph{weakly dominates} $a$ if $a'_i \ge a_i$ for all $i \in [r]$, and $a' \not = a$.
Moreover, we say that $a'$ \emph{strongly dominates} $a$ if $a'_i > a_i$ for all $i \in [r]$.
We call a representable tuple $a$ \emph{maximal} if there is no representable tuple $a'$ that weakly dominates $a$.
\end{definition}

Intuitively, maximal tuples are forming the boundary between $\srep$ and $\snon$ and this is indeed what we prove. 

\begin{lemma}
\label{lemma:boundary}
Let $x \in \mathbb{R}^r$ be contained in $\partial \snon $. Then, there either exists $i \in [r]$ such that $x_i \in \{0,1\}$ or $x$ is a maximal representable tuple.
\end{lemma}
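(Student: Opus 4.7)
The plan is to prove the lemma via the contrapositive for points in the open cube: if $x\in(0,1)^r$ lies in $\partial\snon$, then $x$ must be a maximal representable tuple. First, since $\snon\subseteq[0,1]^r$, we have $\partial\snon\subseteq[0,1]^r$, so either some $x_i\in\{0,1\}$ (done) or $x\in(0,1)^r$. In the latter case every sufficiently small ball $B(x,\eps)$ lies in $(0,1)^r$, and $x\in\partial\snon$ forces every such ball to meet both $\snon$ and the complement of $\snon$ inside the cube, namely $\srep$; closedness of $\srep$ from \cref{lemma:compact} then yields $x\in\srep$.

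For the maximality conclusion I would proceed by contrapositive: suppose $x\in\srep\cap(0,1)^r$ is not maximal, and I will show that $x$ lies in $\srep^{\textrm{o}}$, contradicting the fact that every neighborhood of a point in $\partial\snon$ must contain non-representable tuples. By \cref{observation:below_maximal_is_representable}, it suffices to exhibit a representable $y$ with $y_k>x_k$ for every $k$---a \emph{strong} dominator of $x$---since then $B(x,\eps)\subseteq[\mathbf{0},y]\subseteq\srep$ for small enough $\eps>0$. Because $x$ is not maximal, some $x'\in\srep$ weakly dominates it, so the task reduces to upgrading weak to strong domination. For this I would fix generators $(a_{ij})$ of $x$ and $(a'_{ij})$ of $x'$, both with all entries in $(0,1)$ by the interiority assumption, and form their convex combination $(a^{(t)}_{ij}):=(1-t)a_{ij}+ta'_{ij}$, which is again a valid generator because the constraints $0\le a_{ij}\le 1$ and $a_{ij}+a_{ji}\le 1$ are convex. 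Strict concavity of $\log$ applied coordinatewise yields $x^{(t)}_k\ge x_k^{1-t}(x'_k)^t\ge x_k$, with the first inequality strict whenever the $k$-th rows of the two generators disagree; this already produces $x^{(t)}_k>x_k$ for every $k\in J:=\{k:x'_k>x_k\}$ and every $t\in(0,1]$.

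The main obstacle is arranging strict dominance also on $J^c$. My plan is to exploit the non-uniqueness of a generator for $x'$: the fibers of the generator map above interior points of $\srep$ are typically positive-dimensional, so one can usually re-choose $(a'_{ij})$ so that row $k$ differs from row $k$ of $(a_{ij})$ for every $k\in J^c$ without changing $x'$. If this re-choice is obstructed by tight constraints $a'_{k\ell}+a'_{\ell k}=1$, the fallback is an iterative strategy---starting from $x'$, construct a new weak dominator with strictly larger $J$ by trading entries across a tight pair, and iterate until $|J|=r$. Making either the row-modification or the iteration rigorous in the rigid configurations, where every pair in some row is tight, is the technical difficulty I anticipate being the crux of the argument.
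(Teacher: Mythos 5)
Your reduction of the lemma to a weak-to-strong domination upgrade is correct and structurally matches the paper's proof: the topological part (boundary points of $\snon$ in the open cube lie in $\srep$ by closedness of $\srep$, and exhibiting a representable strong dominator of a non-maximal $x$ would, via \cref{observation:below_maximal_is_representable}, put a whole ball around $x$ inside $\srep$ and hence off the boundary) is handled essentially as in the paper. The gap is that the upgrade itself---given a representable $x'$ weakly dominating $x\in(0,1)^r$, produce a representable tuple strongly dominating $x$---is precisely what the paper isolates as \cref{cor:strong_domination}, and you do not prove it. Your primary strategy (interpolating the two generators and applying weighted AM--GM factorwise) provably yields nothing on $J^c=\{k: x'_k=x_k\}$ whenever the two generators agree on those rows, and your repairs are only gestured at: as you observe yourself, in the rigid configurations where every pair $a'_{k\ell}+a'_{\ell k}=1$ in some row $k\in J^c$ is tight, you cannot increase any entry of row $k$ without decreasing the matching column entry and thereby perturbing $x'_\ell$, so the re-choice of the generator is genuinely obstructed and the iteration is not carried out. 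As written, the proof is incomplete at its crux.

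The missing step has a short direct proof that bypasses interpolation entirely (\cref{lemma:tradingoffepsilons} in the paper): fix a single index $k$ with $x'_k>x_k$, and perturb the generator of $x'$ by subtracting $\delta$ from every entry $a'_{kj}$ of row $k$ and adding $\delta$ to every entry $a'_{jk}$ of column $k$. Each constraint $b_{ij}+b_{ji}\le 1$ is preserved because in every affected pair one entry goes up and the other down by the same $\delta$; every coordinate $i\neq k$ of the generated tuple strictly increases (by at least $\delta\, x'_i$), while the $k$-th coordinate drops only by $O(\delta)$ and so stays above $x_k$ for $\delta$ small enough. This is your fallback ``trade across a pair,'' but executed simultaneously for all $j$ against the one index that has slack, which is exactly why no iteration and no case analysis on tight pairs is needed. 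Until you supply this (or an equivalent) argument for the weak-to-strong upgrade, the lemma is not proved.
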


We defer the easy, yet slightly technical proof, together with proofs of a few other technical lemmas, to \cref{sec:technical}. 
Our main technical contribution is a proof that a locally supporting hyperplane can be found for any maximal tuple $a$. 

\begin{lemma}
\label{lemma:maximaltuplehyperplane}
For each maximal representable tuple $a$, there exists a locally supporting hyperplane for $\isnon$ intersecting $a$. 
\end{lemma}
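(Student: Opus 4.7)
The plan is to construct the supporting hyperplane explicitly from a generator of $a$. I first assume $a_i > 0$ for every $i \in [r]$; the degenerate case where some $a_i = 0$ is handled by the coordinate hyperplane $\{z:\, z_i = 0\}$, which trivially supports $\isnon \subseteq [0,1]^r$. Fix a non-zero generator $(a_{ij})$ of $a$. Maximality forces $a_{ij}+a_{ji}=1$ for every pair $\{i,j\}$: otherwise, increasing $a_{ij}$ slightly while keeping all other entries fixed strictly increases $a_i$ and leaves $a_k$ unchanged for $k\neq i$, contradicting maximality. For each pair with $i<j$ I define the tangent vector
$$v_{ij} \;:=\; -\frac{a_i}{a_{ij}}\,e_i \;+\; \frac{a_j}{a_{ji}}\,e_j \;\in\; \R^r,$$
which is, to first order, the change induced on $a$ by the edge perturbation $a_{ij}\mapsto a_{ij}-\varepsilon,\; a_{ji}\mapsto a_{ji}+\varepsilon$ that preserves the edge-sum constraint. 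The candidate hyperplane is $H := a + V$ with $V := \operatorname{span}\{v_{ij}\}_{i<j}$.

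The first technical step is to show $\dim V = r-1$, so that $H$ is a genuine hyperplane. For the upper bound $\dim V\le r-1$ I use maximality via the inverse function theorem: the columns of the Jacobian of the parametrization $(a_{ij})_{i<j}\mapsto(\prod_{k\neq i}a_{ik})_i$ (with $a_{ji}=1-a_{ij}$) are, up to sign, exactly the $v_{ij}$, so if the rank were $r$ the image would contain an open neighbourhood of $a$, producing a representable tuple strongly dominating $a$, a contradiction. For the lower bound $\dim V\ge r-1$ the star family $\{v_{1j}\}_{j=2}^{r}$ is already linearly independent, because $v_{1j}$ is the unique member with a nonzero $e_j$-component. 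With $\dim V=r-1$ the normal $h$ to $V$ is unique up to scaling; solving $h\cdot v_{1j}=0$ along the star edges I obtain $h_1 = 1/a_1$ and $h_j = a_{j1}/(a_j a_{1j})$ for $j\ge 2$, which are all strictly positive --- the crucial geometric fact driving the final step.

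To show that $H$ is weakly locally supporting for $\isnon$ at $a$, I argue the contrapositive: for some $\varepsilon>0$, every $q'\in B(a,\varepsilon)$ with $h^T q'<h^T a$ is representable and hence not in $\isnon$. Since $h$ has strictly positive entries, translating $q'$ along $+h$ onto $H$ produces a point $q''\in H$ that componentwise dominates $q'$, so by \cref{observation:below_maximal_is_representable} it suffices to show that every $q''\in H\cap B(a,\varepsilon')$ is representable. For such $q''$ I plan to expand $q''-a=\sum_e c_e v_e$ as a combination of the $v_{ij}$ supported on a spanning tree $T$ of $K_r$, selected adaptively based on the signs of the entries of $q''-a$, and consider the perturbed generator $b_{ij}=a_{ij}-c_{ij}$, $b_{ji}=a_{ji}+c_{ij}$ on tree edges (unchanged off $T$); the generator is valid for small $c$, and a Taylor expansion gives the generated tuple $p_k = q''_k + E_k$, where $E_k$ is a quadratic form in the tree-incident coefficients at $k$ whose sign is governed by products $(\delta_{ki}/a_{ki})(\delta_{kj}/a_{kj})$ over tree-neighbours of $k$. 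The hard part, which I expect to be the technical heart of \cref{sec:hyperplane}, is to show that for every direction $q''-a\in V$ one can choose $T$ and coefficients so that the tree-incident coefficients at each vertex are sign-aligned, forcing $E_k\ge 0$ simultaneously for all $k$ and hence $p\ge q''$ componentwise; once that is established, representability of $p$ combined with \cref{observation:below_maximal_is_representable} yields representability of $q''$, completing the proof.
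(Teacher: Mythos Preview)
Your plan follows essentially the same architecture as the paper's proof in \cref{sec:hyperplane}: the movement vectors $v_{ij}$ are (up to sign) the paper's $w_{ij}$, the candidate hyperplane is the paper's $H_a$, the dimension count matches (the paper also uses the star family for the lower bound, and an explicit perturbation rather than the inverse function theorem for the upper bound---both arguments are valid), and the reduction ``$h>0$, so it suffices to show $H_a\cap B(a,\varepsilon)\subseteq\srep$'' is exactly how the paper finishes. Your explicit computation of $h$ and your observation that $a_{ij}+a_{ji}=1$ are correct side remarks the paper does not need.

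Two points deserve care. First, the sign-alignment step you defer is precisely the paper's \cref{lemma:quadratic}; the paper proves it by a greedy/extremal argument over \emph{all} $\binom{r}{2}$ movement vectors rather than restricting to a spanning tree a priori. It turns out the greedy construction always yields a forest (each step retires one endpoint, so no cycle can close), so your spanning-tree framing is compatible---but the proof does not proceed by first choosing $T$ from the sign pattern of $q''-a$ and then solving on $T$; it builds the support edge by edge. If you insist on choosing $T$ up front, you will have to re-prove feasibility, which is not obviously easier.

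Second, your description of the error $E_k$ as a quadratic form is inaccurate whenever $\deg_T(k)\geq 3$: the product $\prod_{j\in N_T(k)}(1+\delta_{kj}/a_{kj})$ has terms of every degree up to $\deg_T(k)$, and the cubic and higher terms can be negative even under sign-alignment (e.g.\ when all $\delta_{kj}/a_{kj}<0$). Sign-alignment only guarantees the \emph{quadratic} part is nonnegative. The paper handles this in \cref{theorem:hyperplane_rep} by first bounding $|\alpha_{ij}|$ uniformly (\cref{cor:unitvectorsmallalphas}) and then showing that for small enough $\varepsilon$ the nonnegative quadratic contribution dominates the sum of all higher-order terms. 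You will need the same two-step estimate; ``$E_k\geq 0$'' as stated does not hold.
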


The non-trivial proof of the above lemma is deferred to \cref{sec:hyperplane}. As a corollary, we infer that the whole set $\isnon$ is convex. 

\begin{corollary}
\label{cor:convex_interior}
The set $\isnon$ is convex. 
\end{corollary}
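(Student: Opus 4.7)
The plan is to apply \cref{theorem:book} to $\isnon$. This reduces convexity to verifying three things: $\isnon$ is open, $\isnon$ is path-connected, and every point of $\partial \isnon$ admits a weakly locally supporting hyperplane. Openness holds by definition ($\isnon$ is the interior of $\snon$), and path-connectedness has already been established in the preceding lemma. The remaining task is to produce, for each $y \in \partial \isnon$, a weakly locally supporting hyperplane for $\isnon$ through $y$.

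My first step would be to record the inclusion $\partial \isnon \subseteq \partial \snon$, which is needed because \cref{lemma:boundary} is phrased for $\partial \snon$ rather than $\partial \isnon$. This is a brief topological check: if $y \in \partial \isnon$, then $y \notin \isnon$ (as $\isnon$ is open) and every neighborhood of $y$ meets $\isnon \subseteq \snon$; moreover either $y \in \snon \setminus \isnon$, in which case no neighborhood of $y$ is contained in $\snon$, or $y \notin \snon$ altogether, in which case every neighborhood trivially meets $\R^r \setminus \snon$. Either way $y \in \partial \snon$, so \cref{lemma:boundary} becomes applicable.

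With the inclusion in hand, the analysis of $y \in \partial \isnon$ splits into two cases via \cref{lemma:boundary}. If $y$ is a maximal representable tuple, then \cref{lemma:maximaltuplehyperplane} directly furnishes the required hyperplane. Otherwise some coordinate $y_i$ lies in $\{0,1\}$; since $\isnon$ is the interior in $\R^r$ of a subset of $[0,1]^r$, we have $\isnon \subseteq (0,1)^r$, and the axis-aligned hyperplane $\{x \in \R^r : x_i = y_i\}$ passes through $y$ while keeping all of $\isnon$ strictly on one side. This is in fact a globally supporting hyperplane, hence in particular a weakly locally supporting one. Combining both cases verifies the third hypothesis of \cref{theorem:book} and yields the convexity of $\isnon$.

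The corollary itself is essentially bookkeeping, since all of the substance has been absorbed into the preceding lemmas. The only mild subtlety, as noted above, is the inclusion $\partial \isnon \subseteq \partial \snon$; the genuine obstacle of the overall argument sits in \cref{lemma:maximaltuplehyperplane}, whose proof is deferred to \cref{sec:hyperplane}.
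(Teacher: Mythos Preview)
Your proposal is correct and follows essentially the same route as the paper: apply \cref{theorem:book}, invoke \cref{lemma:boundary} to split boundary points into the maximal-tuple case (handled by \cref{lemma:maximaltuplehyperplane}) and the axis-aligned case (handled by a coordinate hyperplane). You are in fact slightly more careful than the paper, which writes $\partial \snon$ directly without justifying the inclusion $\partial \isnon \subseteq \partial \snon$ that you spell out.
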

\begin{proof}
By \cref{theorem:book} it suffices to provide a weakly locally supporting hyperplane for any $a \in \partial \snon$. 
By \cref{lemma:boundary}, any $a \in \partial \snon$ is either a maximal representable tuple and hence the existence of the supporting hyperplane follows from \cref{lemma:maximaltuplehyperplane}, or we have $a_i = 0$ or $a_i = 1$, respectively, for some $i$. 
But then the hyperplane $\{x \in \mathbb{R}^r \colon e_i^Tx = 0\}$ or $\{x \in \mathbb{R}^r \colon -e_i^Tx = -1\}$, respectively, is a weakly (locally) supporting hyperplane for $\isnon$ intersecting $a$. 
\end{proof}


The proof of \cref{thm:main} now easily follows.

\begin{proof}[Proof of \cref{thm:main}]
We prove the statement by induction on $r$. For $r = 2$, the statement trivially holds. Now, let $r \geq 3$ arbitrary and assume that $\snon^{(r-1)}$ is convex. Let $x \neq y \in \snon^{(r)}$ and $\alpha \in (0,1)$ be arbitrary. We need to show that for $z:= \alpha x + (1-\alpha)y$ we have $z \in \snon^{(r)}$. 
As $\srep^{(r)}$ is a closed set  (\cref{lemma:compact}), there exists some $\varepsilon$ with $0 < \eps < \min(\alpha, 1 - \alpha)$ such that $x' = (1 - \eps)x + \eps y \not\in \srep^{(r)}$ and, hence, $x' \in \snon^{(r)}$ since the whole segment $\{\beta x + (1-\beta)y, 0 < \beta < 1 \}$ is contained in $[0,1]^r$, and $y':= (1-\varepsilon)y + \varepsilon x \in \snon^{(r)}$. Furthermore, there exists an $\alpha' \in (0,1)$ such that $z = \alpha'x' + (1-\alpha')y'$. 

If $x',y' \in \isnon ^{(r)}$, then, by \cref{cor:convex_interior}, it follows that $z \in \isnon ^{(r)}$ and we are done. 
Otherwise, $x' \not \in\isnon ^{(r)}$ or $y' \not \in\isnon ^{(r)}$. Without loss of generality, assume that $x'\not \in\isnon ^{(r)}$. Since $x' \not \in \snon^{(r)}$, \cref{lemma:boundary} implies that there exists some $i\in [r]$ with $x_i' \in \{0,1\}$. 
Our choice of $\eps > 0$ now implies that either $x_i=y_i=z_i=1$ or $x_i=y_i=z_i=0$. 

In the first case, as $x$ is not representable, there exists some $j \in [r] \setminus \{i\}$ with $x_j > 0$. 
Therefore, $z_j > 0$ and as $z_i = 1$, any generator of $z$ would need to have $z_{ij} = 1$ and $z_{ji} > 0$, a contradiction with $z_{ij} + z_{ji} \le 1$. Hence, $z \in \snon^{(r)}$. 

In the second case, assume without loss of generality that $i=r$. 
Let $\Tilde{x}$, $\Tilde{y}$, $\Tilde{z} \in [0,1]^{r-1}$ be equal to the vectors $x$, $y$ and $z$ restricted to the first $r-1$ coordinates. We have $\Tilde{x},\Tilde{y} \in \snon^{(r-1)}$, since otherwise taking their generator and augmenting it by zeros would generate $x$ or $y$, respectively. 
As $\Tilde{z}$ is a convex combination of $\Tilde{x}$ and $\Tilde{y}$, the induction hypothesis implies that $\Tilde{z} \in \snon^{(r-1)}$ and therefore $z \in \snon^{(r)}$, which concludes the induction step. 
\end{proof}

\section{Technical preparation}
\label{sec:technical}

In this section we prove several technical results that are needed for the proof. 
First, we prove the equivalence of the notions of weak and strong dominance. 
To this end, we first show a simple ``continuity'' statement that shows that for any representable tuple $a$, one can increase all but one of its coordinates a little bit at the expense of decreasing the remaining one. 
\begin{lemma}
\label{lemma:tradingoffepsilons}
Let $(a_1, \dots, a_r)$ be a representable tuple with $a_i > 0$ for each $i \in [r]$. For each $k \in [r]$, there exist an $\eps > 0$ and a $\xi > 0$ such that for all $t$ with $0 < t < \eps$, the tuple $a'$ defined by $a'_k = a_k - t$ and $a'_i = a_i + \xi t$ for $i \not = k$ is also representable. 
\end{lemma}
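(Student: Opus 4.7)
The plan is to exhibit an explicit perturbation of a generator of $(a_1, \ldots, a_r)$. First I would fix any generator $(a_{ij})$ and observe that the positivity assumption $a_i > 0$ forces every entry to lie strictly in $(0,1)$: each $a_{ij}$ is a factor of $a_i$ so $a_{ij} > 0$, and the constraint $a_{ij} + a_{ji} \leq 1$ together with $a_{ji} > 0$ forces $a_{ij} < 1$. A similar argument shows $a_j < 1$ strictly for every $j$ (otherwise all $a_{ji}$ equal $1$, forcing $a_{ij} = 0$ and hence $a_i = 0$).

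The natural local move is, for each $j \neq k$, to swap a small positive quantity $\delta_j$ from $a_{kj}$ onto $a_{jk}$, leaving all other entries unchanged. This preserves the sum constraint $a_{kj} + a_{jk} \leq 1$ automatically, and affects only the coordinates $a_k$ and $a_j$ of the generated tuple: it increases $a_j$ by exactly $a_j \cdot \delta_j/a_{jk}$, and multiplies $a_k$ by $\prod_{j \neq k}(1 - \delta_j/a_{kj})$. To make all the increments equal to $\xi t$ for a common $\xi$, I would set $\delta_j := \xi t \cdot a_{jk}/a_j$, and then choose the constant
\[
\xi \;:=\; \frac{1}{a_k \sum_{j \neq k} a_{jk}/(a_j a_{kj})} \;>\; 0,
\]
so that the first-order decrease in $a_k$ matches $t$ exactly.

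The only delicate point---and the step I expect to be the main obstacle---is upgrading this first-order match to an honest inequality showing that the new value of $a_k$ is at least $a_k - t$ rather than only approximately so. I would handle it via the elementary product inequality $\prod_j (1 - x_j) \geq 1 - \sum_j x_j$ for $x_j \in [0,1]$: with $x_j = \xi t \cdot a_{jk}/(a_j a_{kj})$, our choice of $\xi$ gives $\sum_j x_j = t/a_k$, so the product is at least $1 - t/a_k$ and thus the new value of $a_k$ is at least $a_k - t$. Consequently the perturbed generator produces a representable tuple that weakly dominates the target $a' = (a_1 + \xi t, \ldots, a_k - t, \ldots, a_r + \xi t)$.

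Finally, invoking \cref{observation:below_maximal_is_representable} shows that $a'$ is itself representable. It then remains only to choose $\varepsilon > 0$ small enough that the side conditions are met: $\delta_j \leq a_{kj}$ (so $a'_{kj} \geq 0$), $a_{jk} + \delta_j \leq 1$, $a_k - t \geq 0$, and $a_j + \xi t \leq 1$ for every $j \neq k$; all of these hold simultaneously for sufficiently small $t$ thanks to the strict bounds $a_{ij} \in (0,1)$ and $a_j \in (0,1)$ established at the outset.
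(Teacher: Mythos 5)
Your proposal is correct and follows essentially the same route as the paper's proof: perturb a generator by shifting small amounts from the entries $a_{kj}$ onto $a_{jk}$, control the resulting multiplicative change in coordinate $k$, and conclude via \cref{observation:below_maximal_is_representable}. The only differences are cosmetic---you tune the individual shifts $\delta_j$ so every coordinate $j\neq k$ increases by exactly $\xi t$ and use the explicit bound $\prod_j(1-x_j)\geq 1-\sum_j x_j$, whereas the paper uses a uniform $\delta$ and a cruder first-order estimate.
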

\begin{proof}
Let $(a_{ij})$ be a generator of $(a_1, \ldots, a_r)$. As $a_i > 0$ for each $i \in [r]$, $(a_{ij})$ is a non-zero generator. Now, for some $\delta > 0$, consider $(b_{ij})$ with $$b_{ij} = \begin{cases}
a_{ij} - \delta \text{ if $i = k$} \\
a_{ij} + \delta \text{ if $j = k$} \\
a_{ij} \text{\hspace{15pt} otherwise }
\end{cases}$$
for each $i \neq j \in [r]$. We have $b_{ij} + b_{ji} = a_{ij} + a_{ji} \leq 1$ for each $i, j \in [r], i\not= j$. Furthermore, if we choose $\delta$ such that $0 < \delta < \varepsilon' := \min_{i \neq j \in [r]} \min(a_{ij},1-a_{ij}) < 1$, we have $0 \leq b_{ij} \leq 1$ for each $i,j \in [r], i\not = j$. 
In that case, $(b_{ij})$ is a valid generator that generates a tuple $(b_1, \ldots, b_r)$ such that:

\[b_k = \prod_{j \neq k} b_{kj} 
= \prod_{j \neq k} (a_{kj} - \delta) 
\geq \left( \prod_{j \neq k} a_{kj} \right) - \delta f((a_{ij})) = a_k - \delta f((a_{ij})) \]

for some function $f$ with $f((a_{ij})) > 0$. Note that such a function $f$ exists, as $\delta < 1$ and therefore $\delta^e \leq \delta$ for each $e \geq 1$. For each $i \in [r] \setminus \{k\}$, we have:

\begin{align*}
b_i &= \prod_{j \neq i} b_{ij} 
= (a_{ik} + \delta) \cdot \prod_{j \notin \{i,k\}} a_{ij} = a_i + \delta \cdot  \prod_{j \notin \{i,k\}} a_{ij}\\
&\geq  a_i + \delta \cdot  \prod_{j \neq i} a_{ij} = a_i + \delta a_i
\end{align*}

Set $t = \delta \cdot f((a_{ij}))$, $\xi = \frac{1}{f((a_{ij}))}\min_{i \in [r]} a_i > 0$ and $\varepsilon = \varepsilon' \cdot f((a_{ij})) > 0$.  Now, consider some arbitrary $t$ with $0 < t < \varepsilon$. 
The definition of $\varepsilon$ implies that $0 < \delta = \frac{t}{f((a_{ij}))} < \frac{\varepsilon}{f((a_{ij}))} = \varepsilon'$. Thus, we can represent a tuple $(b_1, \ldots, b_r)$ with $b_k \geq a_k - \delta f((a_{ij})) \geq a_k - t = a'_k$ and $b_i \geq a_i +  \delta \cdot  a_i \geq a_i + \xi \cdot t = a'_i$ for $i \neq k$. This tuple dominates the tuple $a'$. As $a'_i \geq 0$ for each $i \in [r]$, \cref{observation:below_maximal_is_representable} implies that we can represent $a'$.
\end{proof}

Now we are ready to show that if a tuple is weakly dominated by some other tuple, it is also strongly dominated by (a potentially different) one. 

\begin{corollary}[strong vs weak domination]
\label{cor:strong_domination}
Let $a = (a_1, \dots, a_r)$ be a representable tuple such that for all $i \in [r]$ we have $0 < a_i < 1$. If there exists a representable tuple that weakly dominates $a$, then there also exists a representable tuple that strongly dominates $a$. 
\end{corollary}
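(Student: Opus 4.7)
The plan is to derive this as a fairly direct consequence of \cref{lemma:tradingoffepsilons} applied to the weakly dominating tuple.

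First, I would invoke the hypothesis to obtain a representable tuple $b = (b_1,\dots,b_r)$ with $b_i \geq a_i$ for all $i \in [r]$ and $b \neq a$. Since $b \neq a$ and the inequality is coordinate-wise, there must exist some index $k \in [r]$ where the inequality is strict, i.e., $b_k > a_k$. Moreover, since $a_i > 0$ by assumption, we have $b_i \geq a_i > 0$ for every $i \in [r]$, so $b$ satisfies the positivity precondition required by \cref{lemma:tradingoffepsilons}.

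Next, I would apply \cref{lemma:tradingoffepsilons} to the tuple $b$ with the chosen index $k$, yielding constants $\eps > 0$ and $\xi > 0$ such that, for every $t$ with $0 < t < \eps$, the tuple $b'$ defined by $b'_k = b_k - t$ and $b'_i = b_i + \xi t$ for $i \neq k$ is representable. Then I would choose $t$ small enough so that simultaneously $0 < t < \eps$ and $t < b_k - a_k$; such a $t$ exists precisely because $b_k > a_k$ (and $\eps > 0$). With this choice, $b'_k = b_k - t > a_k$ and, for every $i \neq k$, $b'_i = b_i + \xi t > b_i \geq a_i$, so $b'$ strongly dominates $a$. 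This would complete the proof.

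There is no real obstacle here beyond bookkeeping: the entire argument is a one-line use of \cref{lemma:tradingoffepsilons} once the index $k$ witnessing the strict inequality is identified, and the only care needed is to pick $t$ small enough both to stay inside the representability window $(0,\eps)$ and to preserve the strict inequality at coordinate $k$.
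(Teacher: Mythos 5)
Your proposal is correct and follows essentially the same route as the paper's proof: apply \cref{lemma:tradingoffepsilons} to the weakly dominating tuple at an index $k$ where the domination is strict, then take $t$ small enough to keep the $k$-th coordinate above $a_k$ while the others strictly increase. The only (minor) added value is that you explicitly verify the positivity precondition of the lemma, which the paper states without comment.
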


\begin{proof}
Let $(a'_1,\ldots,a'_r)$ be a representable tuple that weakly dominates $a$. Note that we have $a'_i > 0$ for each $i \in [r]$. Let $k \in [r]$ such that $a'_k > a_k$. According to $\cref{lemma:tradingoffepsilons}$, there exists $\varepsilon > 0$ and $\xi > 0$ such that for all $0 < t < \varepsilon$, the tuple $(a'_1 + \xi t,  \ldots, a'_{k-1} + \xi t, a'_k - t, a'_{k+1} + \xi t, \ldots, a'_r + \xi t)$ is representable. For $t$ small enough, this tuple strictly dominates the tuple $(a_1, \ldots, a_r)$.
\end{proof}

We are now ready to prove \cref{lemma:boundary}.

\begin{proof}[Proof of $\cref{lemma:boundary}$]
We show the contrapositive. Let $x \in \mathbb{R}^r$ such that $x_i \notin \{0,1\}$ for each $i \in [r]$ and $x$ is not a maximal representable tuple. We show that this implies the existence of a ball $B(x,\eps)$ around $x$ with radius $\eps > 0$ such that either $ B(x,\eps) \subseteq \snon $ or $B(x,\eps) \cap \snon = \emptyset$, which in turn implies $x \not\in \partial \snon$. 

For $x \notin [0,1]^r$ there is clearly such a ball. 
Otherwise, we have $x \in (0,1)^r$, as we assume that for each $i \in [r]$, $x_i \notin \{0,1\}$.
If $x \in \srep$, but $x$ is not maximal representable, there is a representable tuple that weakly dominates $x$ and as $x \in (0,1)^r$, \cref{cor:strong_domination} provides a representable tuple that strongly dominates $x$. 
Hence, there exists some $\varepsilon > 0$ such that the tuple $(x_1 + \varepsilon, \ldots, x_r + \varepsilon)$ is a representable tuple. 
This implies that for $\eps' = \min(\varepsilon, \min_{i \in [r]} x_i)$ we have $B(x, \eps') \subseteq \srep$ due to \cref{observation:below_maximal_is_representable} and, hence, $B(x, \eps') \cap \snon = \emptyset$ as needed. 
Finally, in the case $x \in (0,1)^r \cap \snon$ \cref{lemma:compact} implies that the complement of $\srep$ is open which in turn implies the existence of an $\eps > 0$ so that $B(x,\eps) \cap \srep = \emptyset$. 
For $\eps' = \min(\varepsilon, \min_{i \in [r]} \min(x_i,1-x_i))$ we then have $B(x, \eps') \subseteq [0,1]^r \setminus \srep = \snon$, as needed.  
\end{proof}

\section{Construction of hyperplanes}
\label{sec:hyperplane}

In this section we prove our main technical contribution: \cref{lemma:maximaltuplehyperplane} that states that for each maximal tuple we can find a locally weakly supporting hyperplane for the set $\snon$. 
First, we give an informal proof of this result for the case $r=3$, which captures the intuition behind the general proof for all $r$ that we give later. 

\subsection{Informal outline for $r=3$}
\label{subsec:r=3}

Our main observation is that finding a locally supporting hyperplane comes down to proving that a certain set of tuples in the neighbourhood of $a$ is representable. 
In this section, we denote the tuples, more intuitively, as triples. 
So, we now focus on how to generate triples similar to $a$. 
	
\paragraph{Generating more triples}
Given a representable triple $a \in (0,1)^3$ generated by the generator $(a_{ij})$, what other triples close to $a$ are representable? 
	Certainly, all the triples that $a$ dominates. 
	Besides, we can play with the generator itself. 
	Adding $\alpha_{12}$ to $a_{12}$ and subtracting it from $a_{21}$ gives us again a valid generator that generates triples of the form
	\begin{align*}
	&\left( a_{13} (a_{12} + \alpha_{12}), (a_{21} - \alpha_{12}) a_{23}, a_{31} a_{32} \right)
	= a + \alpha_{12} \left( a_{13}, -a_{23}, 0\right)
	\end{align*}
	I.e., it generates triples on the line 
	\[
	a + \alpha_{12}(a_{13}, -a_{23}, 0) = a + \alpha_{12} w_{12}, \;\; \alpha_{12} \in \mathbb{R}
	\]
	for $|\alpha_{12}|$ small enough. Similarly, by adding $\alpha_{13}$ to $a_{13}$ and subtracting it from $a_{31}$, we can generate triples on the line
	\[
	a + \alpha_{13} (a_{12}, 0, -a_{32})= a + \alpha_{13} w_{13}, \;\; \alpha_{13} \in \mathbb{R}
	\]
	and by adding $\alpha_{23}$ to $a_{23}$ and subtracting it from $a_{32}$, we can generate triples on the line
	\[
	a + \alpha_{23} (0, a_{21}, -a_{31})= a + \alpha_{23} w_{23}, \;\; \alpha_{23} \in \mathbb{R} 
	\]
	in some neighborhood around the triple  $a$. We call the three lines $\ell_1, \ell_2$ and $\ell_3$. 
	
	Since all components of the generator of $a$ are nonzero, these three lines define an affine subspace of dimension at least two. 
	Later we prove that if $a$ is a maximal representable triple, then the three lines lie on a common plane. 
	The plane spanned by $\ell_1, \ell_2$ and  $\ell_3$ then becomes an obvious suspect for the supporting hyperplane we wish to find!
	
	In fact, we prove that not only triples on the lines $\ell_1, \ell_2$ and $\ell_3$ are representable, given that they lie in some small neighborhood around the maximal representable triple $a$, but \emph{any triple $a'$ in the affine span of the three lines} is representable, provided that $a'\in B(a,\eps)$ for some positive $\eps$ that depends on $a$. 
	This finishes our proof, as we can now find a weakly locally supporting hyperplane for each maximal representable triple $a$.
	
	We now prove that for maximal triples all three lines lie in a common plane and all triples in that plane are representable (if they are close enough to $a$).

	\begin{claim}
	\label{claim:plane}
For a maximal triple $a$, the affine hull of $\ell_1, \ell_2$ and $\ell_3$ is a plane. 
	\end{claim}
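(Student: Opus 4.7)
My plan is to prove the claim by contradiction, using the maximality of $a$ to rule out that the three direction vectors $w_{12}, w_{13}, w_{23}$ span all of $\mathbb{R}^3$. Since each line $\ell_k$ passes through $a$, the affine hull of $\ell_1, \ell_2, \ell_3$ equals $a + \operatorname{span}(w_{12}, w_{13}, w_{23})$, so it suffices to show that this span has dimension exactly $2$.

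First I would observe the lower bound $\dim \operatorname{span} \ge 2$: for a maximal representable triple $a$ with all coordinates in $(0,1)$, each $a_i > 0$ forces every generator entry $a_{ij}$ to be strictly positive, so each $w_{ij}$ is non-zero. Moreover any two of $w_{12}, w_{13}, w_{23}$ have incompatible zero-coordinate patterns (each has a single zero in a different coordinate), so they are pairwise linearly independent, giving $\dim \operatorname{span} \ge 2$. For the matching upper bound I would assume for contradiction that $w_{12}, w_{13}, w_{23}$ form a basis of $\mathbb{R}^3$ and solve the linear system $\alpha_{12} w_{12} + \alpha_{13} w_{13} + \alpha_{23} w_{23} = (1,1,1)^T$ for real coefficients $\alpha_{12}, \alpha_{13}, \alpha_{23}$.

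Using these coefficients, I would construct a perturbed generator $(a'_{ij})$ by adding $t \alpha_{ij}$ to $a_{ij}$ and subtracting it from $a_{ji}$ for each of the three pairs $(i,j) \in \{(1,2),(1,3),(2,3)\}$, where $t > 0$ is chosen sufficiently small. The constraints $a'_{ij} + a'_{ji} \le 1$ are preserved exactly by construction, and strict positivity of $(a_{ij})$ ensures every $a'_{ij}$ remains in $(0,1)$, so $(a'_{ij})$ is a valid generator. A direct Taylor expansion of each product $\prod_{j \ne i} a'_{ij}$ (analogous to the calculation already used to identify the directions $w_{ij}$) shows that the resulting tuple equals $a + t \cdot (1,1,1)^T + O(t^2)$, which for small enough $t$ strictly dominates $a$, contradicting maximality. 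Hence the span cannot be $3$-dimensional, and the affine hull is indeed a plane.

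The only mildly delicate point is the Taylor step: one has to check that the $O(t^2)$ error does not swamp the linear gain of $t$ in each coordinate. This is routine once $(a_{ij})$ and $(\alpha_{ij})$ are fixed, since the hidden constants depend only on these fixed quantities and are thus bounded, so choosing $t$ small enough makes the linear term dominate coordinatewise. Everything else is elementary linear algebra combined with the same style of generator perturbation already used in \cref{lemma:tradingoffepsilons}.
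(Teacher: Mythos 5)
Your proof is correct and follows essentially the same route as the paper: assume the three direction vectors span all of $\mathbb{R}^3$, write $(1,1,1)$ as a linear combination $\sum \alpha_{ij} w_{ij}$, perturb the generator by $t\alpha_{ij}$, and observe that the generated triple is $a + t\cdot(1,1,1) + O(t^2)$, which strictly dominates $a$ for small $t$ and contradicts maximality. The only addition is your explicit verification that the span has dimension at least $2$, which the paper also notes (from the non-vanishing of the generator entries) just before stating the claim.
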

    
	Assume the contrary.  
	Then, there exist $\alpha_{12},\alpha_{13},\alpha_{23} \in \mathbb{R}$, such that  $(1,1,1) = \alpha_{12} w_{12} + \alpha_{13} w_{13} + \alpha_{23} w_{23}$. Now, change the values of $(a_{ij})$ proportional to the values of $\alpha$ to obtain the generator $(a'_{ij})$ with
	\begin{align*}
	a'_{12} &= a_{12} + \xi \alpha_{12}, \qquad
	a'_{21} = a_{21} - \xi \alpha_{12};\\
	a'_{13} &= a_{13} + \xi \alpha_{13},\qquad
	a'_{31} = a_{31} - \xi \alpha_{13};\\ 
	a'_{23} &= a_{23} + \xi \alpha_{23}, \qquad
	a'_{32} = a_{32} - \xi \alpha_{23}. 
	\end{align*}

	Intuitively, we expect these changes to give us a generator of $a' = a + \xi \alpha_{12} w_{12} + \xi \alpha_{13} w_{13} +\xi \alpha_{23} w_{23} = a + \xi \cdot (1,1,1)$. 
	This is almost the case:
	\begin{align*}
	a'&= \left( a'_{12}a'_{13}, a'_{21}a'_{23}, a'_{31}a'_{32}\right)\\
	&= ( (a_{12} +  \xi \alpha_{12})(a_{13} + \xi \alpha_{13}), (a_{21} - \xi \alpha_{12})(a_{23} + \xi \alpha_{23}),\\ &\quad\;\; (a_{31} - \xi \alpha_{13})(a_{32} - \xi \alpha_{23}))\\
	&= \left( a_{12}a_{13}, a_{21}a_{23}, a_{31}a_{32}\right) + \xi \alpha_{12}(a_{13}, -a_{23},0)\\ &\quad + \xi \alpha_{13}(a_{12}, 0,-a_{32}) + \xi \alpha_{23}(0,a_{21},-a_{31})\\ &\quad + \xi^2(\alpha_{12}\alpha_{13}, - \alpha_{12}\alpha_{23}, \alpha_{13}\alpha_{23}) \\
	&= a + \xi\alpha_{12}w_{12} + \xi\alpha_{13}w_{13} + \xi\alpha_{23}w_{23}\\ &\quad + \xi^2(\alpha_{12}\alpha_{13}, - \alpha_{12}\alpha_{23}, \alpha_{13}\alpha_{23})\\ 
	&= a + \xi (1,1,1) +  \xi^2(\alpha_{12}\alpha_{13}, - \alpha_{12}\alpha_{23}, \alpha_{13}\alpha_{23})
	\end{align*}
	
	Choosing $\xi > 0$ small enough, we conclude that the triple $a + \xi/2 \cdot (1,1,1)$ is representable and therefore $a$ is not maximal, a contradiction!
	
	\paragraph{Generating triples on the plane}
	We are given a maximal triple $a$ and some $a'$ in the affine hull of $\ell_1, \ell_2$ and $\ell_3$ that is sufficiently close to $a$. 
	We need to prove that $a'$ is representable. To do so, we first note that as $a'$ is contained in the affine hull, there exist $\alpha_{12},\alpha_{13}$ and $\alpha_{23}$ such that $a' = a + \alpha_{12}w_{12} + \alpha_{13}w_{13} + \alpha_{23}w_{23}$. Now, we employ the same strategy as above and observe that we can change the generator of $a$ as follows
	\begin{align*}
	a'_{12} = a_{12} + \alpha_{12}, \qquad
	a'_{21} = a_{21} - \alpha_{12};\\
	a'_{13} = a_{13} + \alpha_{13},\qquad
	a'_{31} = a_{31} - \alpha_{13};\\
    a'_{23} = a_{23} + \alpha_{23}, \qquad
	a'_{32} = a_{32} - \alpha_{23}, 
	\end{align*}

	so as to generate the triple
	\begin{align*}
	&( (a_{12} +  \alpha_{12})(a_{13} + \alpha_{13}), (a_{21} - \alpha_{12})(a_{23} + \alpha_{23}),\\ &\;\; (a_{31} -  \alpha_{13})(a_{32} -  \alpha_{23}))\\
	&= \left( a_{12}a_{13}, a_{21}a_{23}, a_{31}a_{32}\right) +  \alpha_{12}(a_{13}, -a_{23},0) +  \alpha_{13}(a_{12}, 0,-a_{32})\\ &\quad +  \alpha_{23}(0,a_{21},-a_{31}) + (\alpha_{12}\alpha_{13}, - \alpha_{12}\alpha_{23}, \alpha_{13}\alpha_{23}) \\
	&= a + \alpha_{12}w_{12} + \alpha_{13}w_{13} + \alpha_{23}w_{23} + (\alpha_{12}\alpha_{13}, - \alpha_{12}\alpha_{23}, \alpha_{13}\alpha_{23})\\
	&= a' + (\alpha_{12}\alpha_{13}, - \alpha_{12}\alpha_{23}, \alpha_{13}\alpha_{23}) 
	\end{align*}
	The  term $(\alpha_{12}\alpha_{13}, - \alpha_{12}\alpha_{23}, \alpha_{13}\alpha_{23})$  is important now. 
	We require $(\alpha_{12}\alpha_{13}, - \alpha_{12}\alpha_{23}, \alpha_{13}\alpha_{23}) \geq \mathbf{0}$ to prove that $a'$ is a representable triple. 
	This property does not hold for every choice of the coefficients $\alpha_{12},\alpha_{13},\alpha_{23}$. 
	However, as $w_{12}$, $w_{13}$ and $w_{23}$ are linearly dependent, we have a certain flexibility to choose the $\alpha$'s. 
	In particular, one can choose the $\alpha$'s in such a way that at most $2$ of them are non-zero.
	
	It turns out that it is indeed always possible to choose the $\alpha$'s in such a way that $(\alpha_{12}\alpha_{13}, - \alpha_{12}\alpha_{23}, \alpha_{13}\alpha_{23}) \geq \mathbf{0}$. 
	To see why, observe that the first coordinate of the  quadratic term is negative if and only if out of the  two numbers $a_{12}$ and $a_{13}$ (recall that $a_1 = a_{12}a_{13}$), one is increased and one is decreased.
	This holds analogously also for the other coordinates. 
	So, we show how to generate $a'$ such that this does not happen. 
	
First, one can observe that $a'$ does not dominate $a$ or vice versa: if this was the case, one could obtain a contradiction by showing that $a$ is not a maximal representable triple similarly to the proof of \cref{claim:plane}. 
	Thus, we can assume that there exist $i \neq j \in [3]$ such that $a_i < a'_i$ and $a_j > a'_j$. Assume (without loss of generality) that $a_1 <  a_1'$, $a_2 > a_2'$ and $a_3 \ge a_3'$. 
	In that case, we first fix $\alpha_{23} = 0$. 
	As $a_2 \geq a'_2$ and $a_3 \geq a'_3$, one can show that $a'$ lies in the span of $\ell_1$ and $\ell_2$, thus one can write
	\begin{align*}
	a' &= a + \alpha_{12} w_{12} + \alpha_{13}w_{13}\\
	&= a + \alpha_{12} (a_{13}, -a_{23}, 0) + \alpha_{13}(a_{12}, 0, -a_{32})\\
	&= (a_1 + \alpha_{12}a_{13} + \alpha_{13}a_{12}, a_2-\alpha_{12}a_{23}, a_3 - \alpha_{13}a_{32}).  
	\end{align*}
	Since $a_2' < a_2$, we have $\alpha_{12} > 0$. 
	Similarly, since $a_3' \le a_3$, we have $\alpha_{13} \ge 0$. Together with $\alpha_{23} = 0$, we get $(\alpha_{12}\alpha_{13}, - \alpha_{12}\alpha_{23}, \alpha_{13}\alpha_{23}) \geq \mathbf{0}$, as needed. 
	
	This concludes the proof outline for $r=3$. 
	For general $r$, the last step is slightly more tricky: generally, we set $\alpha_{ij} = 0$ if both $a_i$ and $a_j$ needs to be increased or both needs to be increased. 
	Additionally, if $a_k$ needs to be increased, $\alpha_{ij}(w_{ij})_k$ is non-negative for all $i < j$ and if $a_{k}$ needs to be decreased, $\alpha_{ij}(w_{ij})_k$ is non-positive for all $i < j$. 
	Moreover, for general $r$, augmenting the generator according to the $\alpha$-values might lead to negative higher order terms. However, these are always dominated by the quadratic increase in a neighborhood around $a$. 
	




\subsection{Construction of hyperplanes, in general}

We start by defining ``movement vectors'', analogues to vectors $w_{12}, w_{13}, w_{23}$ from \cref{subsec:r=3}, that correspond to ``allowed movements'' that we may make to construct representable tuples in the vicinity of a representable tuple $a$. 

\begin{definition}[movement vectors]
	Let $a \in (0,1)^r$ be a maximal tuple and $(a_{ij})_{i \neq j \in [r]}$ an arbitrary (nonzero) generator of $a$.
For each $i \not= j \in [r]$, we define $w_{ij}$ as the $r$-dimensional vector such that for each $k \in [r]$, 
\[
(w_{ij})_k = \begin{cases}
\frac{a_i}{a_{ij}} \text{ if }$k = i$ \enspace, \\
\frac{-a_j}{a_{ji}} \text{ if } $k = j$ \enspace, \\
0 \text{ otherwise} \enspace.
\end{cases}
\]
\end{definition}

Similarly to \cref{subsec:r=3}, we now define the span of the movement vectors $H_a$ that we later prove to be a hyperplane for the case of maximal representable tuples. 
\begin{definition}
 We define $H_a := \{a + \sum_{i \neq j \in [r]}\alpha_{ij}w_{ij} \mid \alpha_{ij} \in \mathbb{R} \text{ for all } i \neq j \in [r] \}$. 
\end{definition}
\begin{observation}
\label{obs:dim}
$H_a$ is an affine subspace with a dimension at least $r-1$.
\end{observation}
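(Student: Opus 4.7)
} The plan is to verify the two parts of the statement separately: being an affine subspace is immediate from the definition, while for the dimension bound I will exhibit an explicit collection of $r-1$ linearly independent movement vectors among the $w_{ij}$.

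First, I would note that $H_a$ is, by definition, the translate by $a$ of the linear subspace $V_a := \mathrm{span}\{w_{ij} : i \neq j \in [r]\}$, so it is an affine subspace with $\dim H_a = \dim V_a$. The task is thus to show $\dim V_a \geq r-1$, for which it suffices to exhibit $r-1$ linearly independent movement vectors.

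Next, I would select the $r-1$ vectors $w_{12}, w_{13}, \ldots, w_{1r}$ and argue that these are linearly independent. Writing them as columns of the $r \times (r-1)$ matrix $M$, the defining formula for $w_{1j}$ says that $M$ has all of row $1$ positive (the $(1,j{-}1)$-entry is $a_1/a_{1j}>0$) while for each $j \in \{2,\ldots,r\}$ the $j$-th row contains exactly one nonzero entry, namely $-a_j/a_{j1}$ in column $j-1$. Deleting the first row leaves a diagonal $(r-1) \times (r-1)$ submatrix whose $j$-th diagonal entry is $-a_j/a_{j1}$. Because $a \in (0,1)^r$ and the chosen generator $(a_{ij})$ is nonzero (so $a_{j1} > 0$ for every $j \neq 1$), each of these diagonal entries is nonzero, hence this submatrix is invertible. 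Therefore $M$ has rank $r-1$, its columns are linearly independent, and $\dim V_a \geq r-1$.

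The step that requires care is merely ensuring that the denominators $a_{ij}$ appearing in the definition of $w_{ij}$ are nonzero; this is the reason we work with an explicitly nonzero generator guaranteed by $a \in (0,1)^r$, as noted after \cref{def:generator}. There is no genuine obstacle here: the observation is a linear-algebraic warm-up, and the real work will come later when one shows that for maximal $a$ the dimension is exactly $r-1$ (so $H_a$ is actually a hyperplane) and that $H_a$ serves as a locally supporting hyperplane.
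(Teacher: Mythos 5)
Your proof is correct and takes essentially the same route as the paper: both select the $r-1$ vectors $w_{12},\dots,w_{1r}$ and observe that for $j\geq 2$ only $w_{1j}$ has a nonzero $j$-th coordinate, which forces linear independence. Your matrix formulation and the explicit check that the generator is nonzero are just a more detailed write-up of the same argument.
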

\begin{proof}
	Consider the $r-1$ vectors $w_{12}, w_{13}, \ldots, w_{1r}$. Among those $r-1$ vectors, $w_{1j}$ is the only vector with a non-zero $j$-th coordinate. Hence, the $r-1$ vectors are linearly independent.
\end{proof}

The next lemma corresponds to  \cref{claim:plane} of the informal outline.

\begin{lemma}
\label{lemma:notinH}
Let $a \in (0,1)^r$ be a maximal tuple and $q \in \mathbb{R}_{\geq 0}$ be a non-negative vector such that there exists some index $k \in [r]$ with $q_k > 0$. Then, $a + q \notin H_a$.
\end{lemma}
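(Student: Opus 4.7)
The plan is to argue by contradiction: assume $a + q \in H_a$, i.e.\ $q = \sum_{i < j} \beta_{ij} w_{ij}$ for some real coefficients $\beta_{ij}$, and derive that $a$ cannot be maximal. The argument will split on the dimension of $H_a - a$ (the linear part of $H_a$ viewed as a subspace of $\mathbb{R}^r$), which I first plan to pin down to being either $r-1$ or $r$. From \cref{obs:dim} the dimension is already at least $r-1$. For the upper bound, I would inspect the orthogonal complement: any $h \in \mathbb{R}^r$ satisfying $h^T w_{ij} = 0$ for every $i \neq j$ must obey $h_i a_i a_{ji} = h_j a_j a_{ij}$, so, setting $\mu_i := h_i a_i$, the ratios $\mu_i/\mu_j = a_{ij}/a_{ji}$ are fixed \emph{positive} numbers (note that $a_{ij}, a_{ji} \in (0,1)$ strictly, since the generator is non-zero and $a_{ij} + a_{ji} \leq 1$). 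Thus $h_1$ determines every other $h_i$, so the annihilator is at most one-dimensional and $\dim(H_a - a) \in \{r-1, r\}$.

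In the case $\dim(H_a - a) = r$, the all-ones vector $\mathbf{1}$ lies in $H_a - a$, so I may write $\mathbf{1} = \sum_{i < j} \beta^*_{ij} w_{ij}$. I would then perturb the generator by setting $a'_{ij}(t) := a_{ij} + t\beta^*_{ij}$ and $a'_{ji}(t) := a_{ji} - t\beta^*_{ij}$ for every $i < j$, which preserves each pairwise sum $a'_{ij}(t) + a'_{ji}(t) = a_{ij} + a_{ji} \leq 1$ and stays in $[0,1]$ for small $t > 0$ (using $a_{ij} \in (0,1)$ strictly). A direct first-order expansion then shows that the generated tuple equals $a + t\mathbf{1} + O(t^2)$, which strictly dominates $a$ coordinatewise for all sufficiently small $t > 0$, contradicting the maximality of $a$.

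In the remaining case $\dim(H_a - a) = r - 1$, the annihilator is exactly one-dimensional; let $h \neq 0$ be a spanning vector. The ratio relations $\mu_i/\mu_j = a_{ij}/a_{ji} > 0$ force all values $\mu_i = h_i a_i$ to share a common sign, and since $a_i > 0$ this in turn forces all $h_i$ to share that sign; after flipping the sign of $h$ if necessary, I may assume $h > 0$ componentwise. Since $q \in H_a - a$, orthogonality gives $h^T q = 0$; but $h > 0$, $q \geq 0$, and $q_k > 0$ together imply $h^T q \geq h_k q_k > 0$, a contradiction.

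The main obstacle is the structural claim that every non-zero annihilator vector of $H_a - a$ is strictly sign-definite, which is what rules out a non-negative, non-zero tangent direction in the codimension-one case. Once that is established, the full-dimensional case is handled by the first-order perturbation in the direction $\mathbf{1}$, and the codimension-one case yields the lemma immediately via the inner product with the positive annihilator vector.
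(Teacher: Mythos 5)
Your proof is correct, but it takes a genuinely different route from the paper's. The paper perturbs the generator directly along the direction $q$ (setting $a'_{ij} = a_{ij} + \delta(\alpha_{ij}-\alpha_{ji})$), obtains a representable tuple equal to $a + \delta q - O(\delta^2)\cdot\mathbf{1}$, and then invokes the trading lemma (\cref{lemma:tradingoffepsilons}) to convert the first-order gain $\delta q_k$ in coordinate $k$ into a strict gain in \emph{every} coordinate, contradicting maximality. You instead first classify $\dim(H_a - a)$ via the annihilator: the relations $h^T w_{ij}=0$ force $\mu_i/\mu_j = a_{ij}/a_{ji} > 0$ with $\mu_i = h_i a_i$, so the annihilator is at most one-dimensional and every nonzero element of it is strictly sign-definite (this uses $a\in(0,1)^r$, which guarantees a nonzero generator, exactly as in the lemma's hypotheses). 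In the full-dimensional case you perturb along $\mathbf{1}$ and contradict maximality by a first-order expansion; in the codimension-one case the lemma falls out immediately from $h^Tq=0$ with $h>0$, $q\geq\mathbf{0}$, $q_k>0$. Both arguments rest on the same generator-perturbation computation, but yours avoids \cref{lemma:tradingoffepsilons} entirely, and as a by-product it essentially establishes \cref{cor:hyperplane} (that $H_a$ is a hyperplane with a non-negative normal) en route, whereas the paper derives that corollary afterwards as a consequence of this lemma. The one point worth spelling out in a polished write-up is the consistency check implicit in your dichotomy: the pairwise ratio relations may be over-determined and hence unsatisfiable, which is precisely the $\dim = r$ case, and satisfiable otherwise, which pins the annihilator to exactly one dimension; you handle both branches, so there is no gap.
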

\begin{proof}
We show that the existence of such a vector $q$ would contradict the fact that $a$ is a maximal tuple. Thus, for the sake of contradiction, assume that there exists a non-negative vector $q$ and some $k \in [r]$ such that $q_k > 0$ and $a + q \in H$. Thus, there exist $\alpha_{ij}$'s such that $q = \sum_{i \neq j \in [r]} \alpha_{ij}w_{ij}$. For $\delta > 0$, consider $(a'_{ij})$ with $a'_{ij} = a_{ij} + \delta \cdot (\alpha_{ij} - \alpha_{ji})$ for $i \neq j \in [r]$. Note that 
\[a'_{ij} + a'_{ji} = (a_{ij} + \delta \cdot (\alpha_{ij}- \alpha_{ji})) + (a_{ji} + \delta \cdot (\alpha_{ji} - \alpha_{ij}))  = a_{ij} + a_{ji} \leq 1. \]
Thus, for $\delta$ small enough, $(a'_{ij})$ is a valid generator. Let $a'$ denote the tuple that $(a'_{ij})$ generates. Then, for each $i \in [r]$, we get:
\begin{align*}
a'_i 
     &= \prod_{j \neq i} (a_{ij} + \delta  (\alpha_{ij} - \alpha_{ji}))\\
     &= \prod_{j \neq i} a_{ij} + \left( \sum_{\ell \neq i} \delta (\alpha_{i\ell} - \alpha_{\ell i}) \prod_{j \notin \{i,\ell\}} a_{ij} \right) - O(\delta^2)\\
     &= a_i + \left( \sum_{\ell \neq i} \delta (\alpha_{i\ell} - \alpha_{\ell i}) \frac{a_i}{a_{i\ell}} \right) - O(\delta^2) \\
     &=  a_i + \left( \sum_{\ell \neq i} \delta (\alpha_{i\ell}(w_{i\ell})_i +  \alpha_{\ell i}(w_{\ell i})_i) \right)  - O(\delta^2)\\
     &= a_i + \delta \left( \sum_{\ell \neq j} \alpha_{\ell j} (w_{\ell j})_i \right) - O(\delta^2)\\
     &= (a + \delta q)_i - O(\delta^2)
\end{align*}
Thus, there exists some constant $c \geq 0$, such that for each sufficiently small $\delta > 0$, there is a non-negative representable tuple $b$ with $b(\delta) := a + \delta q - c\delta^2 \cdot \mathbf{1}$. 
 \cref{lemma:tradingoffepsilons} implies the existence of some $\xi > 0$ such that for each sufficiently small $t > 0$, we can represent the tuple $b'(\delta,t)$ with $b'(\delta,t)_k = b(\delta) - t$ and $b'(\delta,t)_i = b(\delta) + \xi t$ for each $i \neq k$. In particular, we can choose $\delta > 0$ small enough such that for $t = (q_k/2)\delta$, the tuple $b'(\delta, t)$ is representable and furthermore:

$$b'(\delta,t)_k = b(\delta)_k - t = a_k + \delta q_k - c \delta^2 - t = a_k + \delta(q_k/2) - c \delta^2 > a_k$$
and for $i \neq k$, 
$$b'(\delta,t)_i = b(\delta)_i + \xi t = a_i + \delta q_i - c \cdot \delta^2 + \xi t \geq a_i - c \cdot \delta^2 + \xi (q_k/2)\delta > a_i$$
which contradicts the maximality of $a$. 
\end{proof}

\begin{corollary}
\label{cor:hyperplane}
For any maximal representable tuple $a$, the set $H_a$ defines a hyperplane. 
That is, there exist $h \in \mathbb{R}^r \setminus \{\bold{0}\}$ and $b \in \mathbb{R}$ such that $H_a = \{x \in \mathbb{R}^r:h^Tx = b\}$. Furthermore, one can choose $h$ such that $h \geq \mathbf{0}$.
\end{corollary}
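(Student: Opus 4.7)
The plan is to combine Observation~\ref{obs:dim} (which gives $\dim H_a \geq r-1$) with Lemma~\ref{lemma:notinH} (which restricts which points can lie in $H_a$) to pin down the dimension exactly and then control the sign of the normal vector.

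First, I would show that $\dim H_a = r-1$. By Observation~\ref{obs:dim}, $H_a$ is an affine subspace of dimension at least $r-1$. If its dimension were $r$, then $H_a$ would be all of $\mathbb{R}^r$, and in particular the point $a + \mathbf{1}$ would lie in $H_a$. But $\mathbf{1}$ is non-negative with $\mathbf{1}_k = 1 > 0$ for every $k$, so Lemma~\ref{lemma:notinH} says $a + \mathbf{1} \notin H_a$, a contradiction. Hence $H_a$ is an affine hyperplane, so we can write $H_a = \{x \in \mathbb{R}^r : h^T x = b\}$ for some $h \in \mathbb{R}^r \setminus \{\mathbf{0}\}$ and $b = h^T a$.

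Next I would prove that every coordinate of $h$ is nonzero. For each $k \in [r]$, the standard basis vector $e_k$ is non-negative with $k$-th coordinate equal to $1 > 0$, so Lemma~\ref{lemma:notinH} yields $a + e_k \notin H_a$; equivalently, $h^T(a + e_k) \neq b$, i.e., $h_k = h^T e_k \neq 0$. Thus $h_k \neq 0$ for all $k \in [r]$.

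Finally, I would argue that all the $h_k$ share a common sign, after which negating $h$ and $b$ (if necessary) gives the desired $h \geq \mathbf{0}$. Suppose, for contradiction, that there exist indices $i \neq j$ with $h_i$ and $h_j$ of opposite sign. Consider the family of vectors $q_\lambda := \lambda e_i + (1-\lambda) e_j$ for $\lambda \in [0,1]$. Each $q_\lambda$ is non-negative and nonzero (since $e_i, e_j$ are), and the map $\lambda \mapsto h^T q_\lambda = \lambda h_i + (1-\lambda) h_j$ is continuous and takes values of opposite sign at $\lambda = 0$ and $\lambda = 1$. By the intermediate value theorem there is some $\lambda^* \in (0,1)$ with $h^T q_{\lambda^*} = 0$, which would give $a + q_{\lambda^*} \in H_a$, contradicting Lemma~\ref{lemma:notinH}. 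Therefore all $h_k$ have the same sign, and by replacing $(h,b)$ with $(-h,-b)$ if necessary we may assume $h_k > 0$ for every $k$, which in particular yields $h \geq \mathbf{0}$ as required.

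The main (minor) subtlety is the last step: it is crucial that $q_\lambda$ remains a non-negative nonzero vector throughout the interpolation, so that Lemma~\ref{lemma:notinH} is applicable at the zero of the linear function $\lambda \mapsto h^T q_\lambda$; this is automatic here because convex combinations of distinct standard basis vectors are always non-negative and nonzero. Apart from this, the argument is a straightforward consequence of Observation~\ref{obs:dim} and Lemma~\ref{lemma:notinH}.
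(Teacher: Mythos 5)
Your proof is correct and follows essentially the same route as the paper: dimension exactly $r-1$ via Observation~\ref{obs:dim} plus the exclusion of $a+\mathbf{1}$ by Lemma~\ref{lemma:notinH}, and then a sign argument ruling out mixed-sign coordinates of $h$ by exhibiting a nonzero non-negative $q$ with $h^Tq=0$ (the paper simply asserts such a $q$ exists where you construct it via the intermediate value theorem, and your extra step showing each $h_k\neq 0$ is harmless but not needed).
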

\begin{proof}
The set $H_a$ defines an affine subspace of dimension at least $r-1$ (\cref{obs:dim}) and of dimension at most $r-1$,  because $a + \mathbf{1} \notin H_a$  according to \cref{lemma:notinH}. 
Thus, $H_a$ is an affine subspace of dimension $r-1$. Hence, there exist $h  \in \mathbb{R}^r \setminus \{\mathbf{0}\}$ and $b \in \mathbb{R}$ such that $H_a = \{x \in \mathbb{R}^r:h^Tx = b\}$. Assume that there exist two indices $i \neq j \in [r]$ such that $h_i > 0$ and $h_j < 0$. This would imply the existence of a non-zero vector $q \geq \mathbf{0}$ with $h^Tq = 0$. As $a \in H_a$ and $h^Ta = b$ we would get $h^T(a+q) = b$. However, as $q$ is non-zero and $q \geq \mathbf{0}$, $a + q \notin H_a$ according to \cref{lemma:notinH}, a contradiction. Thus, either $h  \geq 0$ or $h \leq 0$. As $ \{x \in \mathbb{R}^r:h^Tx = b\} =  \{x \in \mathbb{R}^r:(-h)^Tx = -b\}$, this proves the lemma.  
\end{proof}

The next lemma lies at the heart of our argument. 

\begin{lemma}
\label{lemma:quadratic}
For any maximal representable tuple $a$ and any $a' \in H_a$, there exist values $\alpha'_{ij} \in \mathbb{R}$ for $i \neq j \in [r]$ with $a' = a + \sum_{i \neq j \in [r]} \alpha'_{ij}w_{ij}$ such that for each $k \in [r]$, either $\alpha'_{ij}(w_{ij})_k \leq 0$ for each $i \neq j \in [r]$ or $\alpha'_{ij}(w_{ij})_k \geq 0$ for each $i \neq j \in [r]$. 
\end{lemma}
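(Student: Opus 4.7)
I aim to construct the coefficients $\alpha'_{ij}$ so that they are supported on a bipartite family indexed by the coordinates that weakly increase versus those that strictly decrease from $a$ to $a'$. Concretely, partition $[r] = U \sqcup D$ with $U := \{k : (a'-a)_k \geq 0\}$ and $D := \{k : (a'-a)_k < 0\}$, and search for non-negative $\beta_{ij} \geq 0$ indexed by $(i,j) \in U \times D$ so that $a' - a = \sum_{(i,j) \in U \times D} \beta_{ij}\,w_{ij}$; then set $\alpha'_{ij} := \beta_{ij}$ on $U \times D$ and $\alpha'_{ij} := 0$ on every other ordered pair. Before attacking existence, note that both $U$ and $D$ are non-empty when $a \neq a'$: if $D = \emptyset$, then $a' - a$ would be non-zero and non-negative, contradicting $a' \in H_a$ via \cref{lemma:notinH}, and symmetrically if $U = \emptyset$; the case $a' = a$ is handled by the zero choice.

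The sign requirement of the lemma becomes automatic under this construction. For any $k \in U$, the only non-vanishing terms $\alpha'_{ij}(w_{ij})_k$ come from pairs $(k, j)$ with $j \in D$, each equal to $\beta_{kj}\,(a_k/a_{kj}) \geq 0$; symmetrically, for $k \in D$, the only non-vanishing terms come from pairs $(i, k)$ with $i \in U$, each equal to $\beta_{ik}\,(-a_k/a_{ki}) \leq 0$. Thus everything reduces to showing that the bipartite linear system $\sum_{(i,j) \in U \times D} \beta_{ij} w_{ij} = a' - a$ admits a non-negative solution.

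The heart of the argument is to reinterpret this system as a classical balanced bipartite transportation problem. Writing it out coordinate-wise, for $i \in U$ we need $\sum_{j \in D} \beta_{ij}\,(a_i/a_{ij}) = (a'-a)_i$, and for $j \in D$ we need $\sum_{i \in U} \beta_{ij}\,(a_j/a_{ji}) = -(a'-a)_j$. The asymmetric conversion factors $a_i/a_{ij}$ versus $a_j/a_{ji}$ are aligned via the hyperplane normal $h$ from \cref{cor:hyperplane}. This $h$ is not only non-negative but \emph{strictly} positive, since if $h_k = 0$ then $a + e_k$ would satisfy $h^T(a + e_k) = h^T a$, placing $a + e_k \in H_a$ and contradicting \cref{lemma:notinH}. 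The relation $h^T w_{ij} = 0$ then yields $h_i a_i a_{ji} = h_j a_j a_{ij}$, which says precisely that the two conversion factors differ by the ratio $h_j/h_i$. Substituting $x_{ij} := h_i \beta_{ij}\,(a_i/a_{ij})$ converts the two coordinate equations into $\sum_{j \in D} x_{ij} = h_i (a'-a)_i$ and $\sum_{i \in U} x_{ij} = -h_j (a'-a)_j$, an honest bipartite transportation problem with non-negative supplies and demands. It admits a non-negative solution iff total supply equals total demand, and this balance is exactly $h^T(a'-a) = 0$, which holds because $a, a' \in H_a$.

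The main obstacle I expect is precisely this rescaling step: recognizing that $h$ can be taken strictly positive (rather than merely $\geq \mathbf{0}$) and that $h^T w_{ij} = 0$ produces the single scalar ratio needed to align the mismatched side-by-side coefficients. Once these two structural facts are in place, the remainder is standard — any bipartite transportation problem with matching totals is feasible, for instance via a greedy north-west corner construction — which completes the proof.
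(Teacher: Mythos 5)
Your proof is correct, but it takes a genuinely different route from the paper's. The paper proceeds by an extremal exchange argument: it considers intermediate points $b = a + \sum \beta_{ij}w_{ij}$ that satisfy the sign condition and are coordinatewise sandwiched between $a$ and $a'$, picks one agreeing with $a'$ on the maximum number of coordinates, and—using \cref{lemma:notinH} to extract a pair $(k,\ell)$ with $b_k < a'_k$ and $b_\ell > a'_\ell$—shifts mass onto $\beta_{k\ell}$ to fix one more coordinate, a contradiction. You instead give a direct construction: support the coefficients on $U \times D$ (so the sign condition is automatic), and observe that the resulting linear system is a balanced bipartite transportation problem once rescaled by the hyperplane normal $h$ of \cref{cor:hyperplane}. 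The two structural inputs you need both check out: $h$ is strictly positive (else $a + e_k \in H_a$ would contradict \cref{lemma:notinH}), and $h^Tw_{ij}=0$ gives exactly the identity $h_i a_i/a_{ij} = h_j a_j/a_{ji}$ that aligns supplies with demands; balance is $h^T(a'-a)=0$, and non-emptiness of both $U$ and $D$ for $a' \neq a$ again follows from \cref{lemma:notinH}. The paper's argument is more self-contained (it never invokes the normal vector, only \cref{lemma:notinH}), while yours is more explicit: it produces a closed-form non-negative solution (e.g.\ the proportional fill of the transportation problem), isolates the previously unremarked fact that $h > \mathbf{0}$ strictly, and makes transparent \emph{why} the sign pattern can always be achieved, namely because only ``increasing-to-decreasing'' movement vectors are ever needed.
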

\begin{proof}
Let $a' \in H_a$ be arbitrary. Let $b \in H_a$ a vector such that for each $i \in [r]$, either $a_i \leq b_i \leq a_i'$ or $a_i' \leq b_i \leq a_i$ and there exist values $\beta_{ij} \in \mathbb{R}$ for $i \neq j \in [r]$ such that $b = a + \sum_{i \neq j \in [r]} \beta_{ij}w_{ij}$. 
Furthermore, for each $k \in [r]$, either $\beta_{ij}(w_{ij})_k \leq 0$ for each $i \neq j \in [r]$ or $\beta_{ij}(w_{ij})_k \geq 0$ for each $i \neq j \in [r]$. 
Note that such a vector $b$ always exists, as setting $b = a $ and all the $\beta_{ij}$'s to $0$ would fulfill all the criteria. 
We choose $b$ in such a way that the number of coordinates that $b$ and $a'$ disagree with is minimal. Note that showing $b = a'$ is equivalent to the statement of the lemma. For the sake of contradiction, assume that this is not the case. 

As $a',b \in H_a$, we also have $a + (a'-b) \in H_a$ and $a + (b-a') \in H_a$. 
If $a'_i \geq b_i$ for all $i\in [r]$, then $a' - b$ is a non-zero vector with $a'-b \geq \mathbf{0}$. 
Hence, according to \cref{lemma:notinH},  $a + (a' - b) \notin H_a$, a contradiction. 
Similarly, $a'_i \leq b_i$ for all $i \in [r]$ would also lead to a contradiction. 
Thus, we can conclude that there exist two indices $k,\ell \in [r]$ such that $b_{k}< a'_k$ and $b_{\ell} > a'_{\ell}$. Now, consider the vector $c = a + \sum_{i \neq j \in [r]} \gamma_{ij}w_{ij}$ where for $i \neq j \in [r]$ we define 
\[
\gamma_{ij} 
= \begin{cases} 
\beta_{k\ell} + \min \left(\frac{(a'_k - b_k)}{(w_{k\ell})_k}, \frac{(b_\ell - a'_\ell)}{(w_{\ell k})_\ell} \right) \text{ if $i = k$ and $j = \ell$} \\
\beta_{ij} \text{ otherwise}
\end{cases}
\]
We show that $c$ contradicts the assumption that $b$ is a vector that agrees with $a'$ on the maximum number of coordinates, among the vectors satisfying the properties stated in the beginning. 
We start by showing that for each coordinate $m \in [r]$, we either have $\gamma_{ij}(w_{ij})_m \leq 0$ for each $i \neq j \in [r]$ or $\gamma_{ij}(w_{ij})_m \geq 0$ for each $i \neq j \in [r]$. 
As we assume that this property holds for the vector $b$, we only need to show it for the coordinates $k$ and $\ell$. 

Recall that we have either $a_i \le b_i \le a_i'$ or $a_i' \le b_i \le a_i$ and that $b_k < a_k'$. 
This implies that $a_k \leq b_k < a'_k$. 
In particular, this implies that $\beta_{ij}(w_{ij})_k \geq 0$ for each $i \neq j \in [r]$. 
Thus, it remains to show that $\gamma_{k\ell}(w_{k \ell})_k \geq 0$, which is the case as $(w_{k \ell})_k > 0$ implies $\beta_{k\ell} \geq 0$ and therefore also $\gamma_{k \ell} \geq 0$, as  $\min \left(\frac{(a'_k - b_k)}{(w_{k\ell})_k}, \frac{(b_\ell - a'_\ell)}{(w_{\ell k})_\ell} \right) \geq 0$.  
Proceeding in the same manner, we get that $b_{\ell} > a'_{\ell}$ implies that $a'_{\ell} < b_{\ell} \leq a_{\ell}$. 
In particular, this implies that $\beta_{ij}(w_{ij})_\ell \leq 0$ for each $i \neq j \in [r]$.
Thus, it remains to show that $\gamma_{kl}(w_{k\ell})_{\ell} \leq 0$, which is the case as $(w_{k\ell})_\ell \leq 0$ and $\gamma_{k \ell} \geq 0$. 

Next, we show that for each $i \in [r]$, we either have $a_i \leq c_i \leq a'_i$ or $a_i \geq c_i \geq a'_i$. As $b_i = c_i$ for each $i \in [r] \setminus \{k,\ell\}$, we only need to show it for the coordinates $k$ and $\ell$. We have:

\begin{align*}
a_k &\leq b_k \leq  b_k +\min \left(\frac{(a'_k - b_k)}{(w_{k\ell})_k}, \frac{(b_\ell - a'_\ell)}{(w_{\ell k})_\ell} \right)(w_{k\ell})_k\\
&\leq b_k + \frac{(a'_k - b_k)}{(w_{k\ell})_k}(w_{k\ell})_k  = a'_k
\end{align*}
and 
\begin{align*}
a_\ell &\geq b_\ell \geq b_\ell +\min \left(\frac{(a'_k - b_k)}{(w_{k\ell})_k}, \frac{(b_\ell - a'_\ell)}{(w_{\ell k})_\ell} \right)(w_{k\ell})_\ell\\
&\geq b_{\ell} + \frac{(b_\ell - a'_\ell)}{(w_{\ell k})_\ell}(w_{k\ell})_\ell = a'_\ell
\end{align*}
and therefore $a_k \leq c_k \leq a'_k$ and $a_\ell \geq c_\ell \geq a'_\ell$, as desired. In the second line, we used the fact that $(w_{k\ell})_\ell = -(w_{\ell k})_\ell \leq 0$. Furthermore, note that if $\min \left(\frac{(a'_k - b_k)}{(w_{k\ell})_k}, \frac{(b_\ell - a'_\ell)}{(w_{\ell k})_\ell} \right) = \frac{(a'_k - b_k)}{(w_{k\ell})_k}$, then $c_k = a'_k$, and otherwise $c_\ell = a'_\ell$. Therefore, $c$ and $a'$ differ in a smaller number of coordinates than $b$ and $a'$, which is a contradiction. 
\end{proof}

We will use the following corollary of the above statement. 

\begin{corollary}
\label{cor:unitvectorsmallalphas}
Let $c = \max_{i \in [r]} 1/a_i$. For each unit vector $v \in \mathbb{R}^r$ with $a + v \in H_a$, there exist $\alpha_{ij}$'s with $v = \sum_{i \neq j \in [r]} \alpha_{ij}w_{ij}$ such that for each $k \in [r]$, either $\alpha_{ij}(w_{ij})_k \leq 0$ for each $i \neq j \in [r]$ or $\alpha_{ij}(w_{ij})_k \geq 0$ for each $i \neq j \in [r]$ and furthermore, $|\alpha_{ij}| \leq c$, for each $i \neq j \in [r]$. 
\end{corollary}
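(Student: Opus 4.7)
The plan is to apply \cref{lemma:quadratic} directly to the representable tuple $a' := a + v \in H_a$ in order to produce coefficients $\alpha_{ij}$ with $v = a' - a = \sum_{i \neq j \in [r]} \alpha_{ij} w_{ij}$ that already satisfy the sign condition asserted in the corollary. Thus the sign part of the statement is immediate, and the only remaining task is to verify the size bound $|\alpha_{ij}| \leq c$ for the same choice of $\alpha_{ij}$'s.

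To establish the bound, I would fix an arbitrary pair $i \neq j \in [r]$ and inspect the $i$-th coordinate of the identity $v = \sum_{p \neq q} \alpha_{pq} w_{pq}$. By definition, $(w_{pq})_i$ is nonzero only when $i \in \{p,q\}$, so the $i$-th coordinate is a sum of terms of the form $\alpha_{il}(w_{il})_i$ and $\alpha_{li}(w_{li})_i$ over $l \neq i$. The sign property provided by \cref{lemma:quadratic} guarantees that all of these terms carry a common sign, so no cancellation can occur, and hence
\[
|v_i| \;=\; \sum_{l \neq i} \bigl( |\alpha_{il}(w_{il})_i| + |\alpha_{li}(w_{li})_i| \bigr).
\]
In particular, the right-hand side dominates the single summand indexed by the pair $(i,j)$, which yields $|\alpha_{ij}| \cdot |(w_{ij})_i| \leq |v_i|$.

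Now $|(w_{ij})_i| = a_i/a_{ij}$ by the definition of the movement vectors, and $|v_i| \leq \|v\| = 1$ because $v$ is a unit vector. Combining these with $a_{ij} \in [0,1]$ and the definition $c = \max_{k\in[r]} 1/a_k$ gives
\[
|\alpha_{ij}| \;\leq\; \frac{|v_i|}{|(w_{ij})_i|} \;\leq\; \frac{a_{ij}}{a_i} \;\leq\; \frac{1}{a_i} \;\leq\; c,
\]
which is the desired bound. I do not expect a real obstacle here; the main conceptual point is simply that the sign condition from \cref{lemma:quadratic} rules out any cancellation in the coordinate-wise expansion, which is exactly what lets us bound an individual coefficient by a single coordinate of $v$.
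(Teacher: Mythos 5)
Your proposal is correct and follows essentially the same route as the paper: invoke \cref{lemma:quadratic} for the sign condition, then use the resulting absence of cancellation in a single coordinate to bound each $|\alpha_{ij}|$ by $|v_i|/|(w_{ij})_i| \leq a_{ij}/a_i \leq c$ (the paper phrases this as a proof by contradiction, but the content is identical). One trivial remark: $a+v$ need not be a representable tuple, but \cref{lemma:quadratic} only requires $a' \in H_a$, so this mislabeling is harmless.
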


\begin{proof}
Let $v \in \mathbb{R}^r$ be a unit vector with $a + v \in H_a$. According to \cref{lemma:quadratic}, there exist $\alpha_{ij}$'s such that $v = \sum_{i \neq j \in [r]} \alpha_{ij}w_{ij}$ and for each $k \in [r]$, either $\alpha_{ij}(w_{ij})_k \leq 0$ for each $i \neq j \in [r]$ or $\alpha_{ij}(w_{ij})_k \geq 0$ for each $i \neq j \in [r]$. We show that $|\alpha_{ij}| \leq c$ for each $i \neq j \in [r]$. For the sake of contradiction, assume there exist $k \neq \ell \in [r]$ such that $|\alpha_{k \ell}| > c$. This implies
$$|\alpha_{k \ell} (w_{k \ell})_k | = \big| \alpha_{k \ell} \frac{a_k}{a_{k \ell}}\big| > \frac{1}{a_k} \cdot \frac{a_k}{a_{k\ell}}  > 1$$
and therefore:

\[\big|v_k\big| = \big|\sum_{i \neq j \in [r]} \alpha_{ij}(w_{ij})_k \big| = \sum_{i \neq j \in [r]} \big| \alpha_{ij}(w_{ij})_k \big| \geq \big|\alpha_{k \ell}(w_{k \ell})_k\big| > 1\]

The second inequality follows as for each $i \neq j \in [r]$, $\alpha_{ij}(w_{ij})_k$ has the same sign. This is a contradiction as $v$ is a unit vector and therefore $|v_k| \leq 1$.
\end{proof}

The main theorem now follows by carefully checking that the quadratic terms appearing when we generate $a'$ are always positive.

\begin{theorem}
\label{theorem:hyperplane_rep}
For any maximal representable tuple $a$, there 
exists an $\varepsilon > 0$ such that for any $a' \in H_a \cap B(a,\varepsilon)$, $a'$ is representable.
\end{theorem}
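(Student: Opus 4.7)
The plan is, for each $a' \in H_a \cap B(a,\eps)$, to construct an explicit candidate generator whose induced tuple $\tilde a$ coordinatewise dominates $a'$; representability of $a'$ will then follow from \cref{observation:below_maximal_is_representable}. Fix a non-zero generator $(a_{ij})$ of $a$; note that $a \in (0,1)^r$ forces every $a_{ij}$ to lie strictly between $0$ and $1$, so in particular $a_{\min} := \min_{i\neq j} a_{ij} > 0$. Applying \cref{cor:unitvectorsmallalphas} to the unit vector $v = (a'-a)/\|a'-a\|$ (which lies in the direction of the affine subspace $H_a$, so $a+v \in H_a$) yields coefficients $\alpha_{ij}$ uniformly bounded by $c = \max_i 1/a_i$, satisfying $v = \sum \alpha_{ij} w_{ij}$ and the crucial sign-consistency property: for each coordinate $k$, all products $\alpha_{ij}(w_{ij})_k$ share a common sign over $i \neq j$. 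Scaling by $\|a'-a\|$ produces $\alpha'_{ij}$ satisfying $a' - a = \sum \alpha'_{ij} w_{ij}$ and $|\alpha'_{ij}| \leq c\eps$.

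Next, I would define the candidate generator by $\tilde a_{ij} := a_{ij} + (\alpha'_{ij} - \alpha'_{ji})$. The pair-sum constraint $\tilde a_{ij} + \tilde a_{ji} = a_{ij} + a_{ji} \leq 1$ is automatically preserved, and for $\eps$ small enough (as a fixed function of $a_{\min}$ and $c$) every $\tilde a_{ij}$ lies in $[0,1]$, so $(\tilde a_{ij})$ is a valid generator. Writing $\beta_{ij} := \alpha'_{ij} - \alpha'_{ji}$ and $x_j := \beta_{ij}/a_{ij}$ for each $j \neq i$, a direct computation using the explicit formulas $(w_{ij})_i = a_i/a_{ij}$ and $(w_{ij})_j = -a_j/a_{ji}$ gives the identities
\[
\frac{a'_i}{a_i} \;=\; 1 + \sum_{j \neq i} x_j, \qquad \frac{\tilde a_i}{a_i} \;=\; \prod_{j \neq i}(1 + x_j).
\]

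The final and most delicate step is to verify $\tilde a_i \geq a'_i$ for every $i$, which reduces to the single-variable inequality $\prod_{j \neq i}(1 + x_j) \geq 1 + \sum_{j\neq i} x_j$. Here the sign-consistency from \cref{cor:unitvectorsmallalphas} enters crucially: evaluated at coordinate $i$, it forces the $\beta_{ij}$ (hence the $x_j$) to share a common sign over $j \neq i$, since the two types of non-zero contributions $\alpha'_{ij}(w_{ij})_i$ and $\alpha'_{ji}(w_{ji})_i$ must agree in sign. If all $x_j \geq 0$, expanding the product gives the inequality for free because every elementary symmetric polynomial of degree $\geq 2$ contributes non-negatively. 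If all $x_j \leq 0$, setting $y_j = -x_j \in [0,1]$ (which holds once $\eps$ is small enough, thanks to $a_{\min} > 0$) turns the inequality into the Weierstrass product inequality $\prod_{j\neq i}(1-y_j) \geq 1 - \sum_{j\neq i} y_j$, easily shown by induction on the number of factors. I expect this last step to be the main obstacle: without the sign-consistency, the higher-order cross-terms in the product expansion could take either sign and push $\tilde a_i$ below $a'_i$; the entire engineering behind \cref{lemma:quadratic} and \cref{cor:unitvectorsmallalphas} is precisely to ensure this does not happen.
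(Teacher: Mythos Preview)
Your proposal is correct and follows essentially the same strategy as the paper: invoke \cref{cor:unitvectorsmallalphas} to obtain sign-consistent coefficients bounded by $c$, perturb the generator via $\tilde a_{ij} = a_{ij} + (\alpha'_{ij} - \alpha'_{ji})$, and show the resulting tuple dominates $a'$ so that \cref{observation:below_maximal_is_representable} applies. The one substantive difference is in how you establish the final domination $\tilde a_i \geq a'_i$. The paper expands $\prod_{j\neq i}(a_{ij}+\beta_{ij})$ term by term, shows each quadratic summand is non-negative thanks to sign-consistency, and then crudely bounds every higher-order summand by $u\,\delta^3(2c)^r$ so that the quadratic gain absorbs it for $\delta$ small; this forces an $\eps$ of order $a_k/(2^r(2c)^r)$. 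Your reformulation $\tilde a_i/a_i = \prod_{j}(1+x_j)$ versus $a'_i/a_i = 1+\sum_j x_j$, combined with the observation that sign-consistency makes all $x_j$ share a sign, reduces everything to the Weierstrass product inequality and handles all orders simultaneously. This is cleaner and yields a better (and more transparent) dependence of $\eps$ on the data. One trivial omission: the case $a'=a$ should be dispatched separately before normalizing $v$.
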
\begin{proof}
Note that it is sufficient to prove the existence of an $\varepsilon > 0$, such that for any unit vector $v \in \mathbb{R}^r$ with $a + v \in H_a$ and every $0 \leq \delta < \varepsilon$, the tuple $a^\delta := a + \delta v$ is representable. 
As $v$ is a unit vector with $a + v \in H_a$, according to \cref{cor:unitvectorsmallalphas},  there exist $\alpha_{ij}$'s such that $v = a + \sum_{i \neq j \in [r]} \alpha_{ij}w_{ij}$ and, moreover, for each $k \in [r]$, we either have $\alpha_{ij}(w_{ij})_k \leq 0$ for each $i \neq j \in [r]$ or
$\alpha_{ij}(w_{ij})_k \geq 0$ for each $i \neq j \in [r]$, and $|\alpha_{ij}| \leq c:= max_{i \in [r]}1/a_i$. We have $a^\delta := a + \delta v = a + \sum_{i \neq j \in [r]} (\delta \alpha_{ij})w_{ij}$. 
Consider now $(a^\delta_{ij})$ with $a^\delta_{ij} = a_{ij} + \delta (\alpha_{ij} - \alpha_{ji})$. Note that $a^\delta_{ij} + a^\delta_{ji} = a_{ij} + a_{ji} \leq 1$ for each $\delta$. Furthermore, for each $\delta \geq 0$ and $i \neq j \in [r]$, $|a^\delta_{ij}-a_{ij}| \leq \delta(|\alpha_{ij}| + |\alpha_{ji}|) \leq 2 \delta c$. As $c$ only depends on $a$, there exists some $\varepsilon' > 0$, independent of $v$, such that for each $0 \leq \delta \leq \varepsilon'$, $(a^\delta_{ij})$ is a valid generator. 

Next, we show that there exists some $0 < \varepsilon < \eps'$, again independent of $v$, such that for each $0 \leq \delta \leq \varepsilon$, $(a^\delta_{ij})$ generates a tuple with each coordinate being at least as large as the corresponding coordinate in $a^\delta$. This implies that $a^\delta$ is representable, hence proving the claim.
To that end, note that for an arbitrary $k \in [r]$ we have

\begin{align*}
a^\delta_k &= \prod_{j \neq k} a^\delta_{kj} 
     = \prod_{j \neq k} (a_{kj} + \delta \cdot (\alpha_{kj} - \alpha_{jk})) \\
 &= a_k + \sum_{\ell \neq k} \delta \cdot (\alpha_{k\ell} - \alpha_{\ell k}) \prod_{j \notin \{k,\ell\}} a_{kj}\\
 &\quad + \delta^2 \sum_{\ell' \neq k} \sum_{\ell'' \not\in \{ k, \ell'\} } (\alpha_{k\ell'} - \alpha_{\ell' k}) \cdot  (\alpha_{k\ell''} - \alpha_{\ell'' k}) \prod_{j \notin \{k,\ell', \ell''\}} a_{kj} \\
 &\quad + \delta^3 \sum_{\ell' \neq k} \sum_{\ell'' \not\in \{ k, \ell'\} } \sum_{\ell'''  \not\in \{ k, \ell', \ell''\} } \ (\alpha_{k\ell'} - \alpha_{\ell' k}) \cdot  (\alpha_{k\ell''} - \alpha_{\ell'' k})\\ &\qquad\quad\cdot  (\alpha_{k\ell'''} - \alpha_{\ell''' k}) \prod_{j \notin \{k,\ell', \ell'',\ell'''\}} a_{kj} + \ldots
\end{align*}
First, we take a look at the term linear in $\delta$. We get:
\begin{align*}
&\sum_{\ell \neq k} \delta \cdot (\alpha_{k\ell} - \alpha_{\ell k}) \prod_{j \notin \{k,\ell\}} a_{kj} = \sum_{\ell \neq k} \delta \cdot \left( \alpha_{k\ell}\frac{a_k}{a_{k \ell}} + \alpha_{\ell k} \frac{-a_k}{a_{k\ell}} \right)\\
&= \sum_{\ell \neq k} \delta \alpha_{k\ell} (w_{k\ell})_k + \delta\alpha_{\ell k} (w_{\ell k})_k = \sum_{i \neq j} \delta \alpha_{ij}(w_{ij})_k = \delta v_k
\end{align*}
Next, we find a lower bound for the quadratic term.
Note that for each $\ell' \neq \ell'' \in [r] \setminus \{k\}$, we have:
\begin{align*}
&(\alpha_{k\ell'} - \alpha_{\ell'k}) \cdot (\alpha_{k\ell''} - \alpha_{\ell''k})\\
&= \left(\alpha_{k\ell'}(w_{k\ell'})_k \frac{a_{k\ell'}}{a_k} + \alpha_{\ell'k}(w_{\ell'k})_k \frac{a_{k\ell'}}{a_k} \right)\\
&\quad\cdot\left(\alpha_{k\ell''}(w_{k\ell''})_k \frac{a_{k\ell''}}{a_k} + \alpha_{\ell''k}(w_{\ell''k})_k \frac{a_{k\ell''}}{a_k} \right) \geq 0
\end{align*}
as for all $i \neq j \in [r]$ $\alpha_{ij}(w_{ij})_k \geq 0$, or for all $i \neq j \in [r]$ $\alpha_{ij}(w_{ij})_k \leq 0$. Thus, each summand in the quadratic term is non-negative. Let us define 
$$ u := \max_{\ell' \neq \ell'' \in [r] \setminus \{k\}} (\alpha_{k\ell'} - \alpha_{\ell'k}) \cdot (\alpha_{k\ell''} - \alpha_{\ell''k}) \geq 0$$
We can lower bound the quadratic term by:
\[
\delta^2 \sum_{\ell' \neq k} \sum_{\ell'' \not\in \{ \ell', k\}} (\alpha_{k\ell'} - \alpha_{\ell' k}) \cdot  (\alpha_{k\ell''} - \alpha_{\ell'' k}) \prod_{j \notin \{k,\ell', \ell''\}} a_{kj} \geq \delta^2 \cdot u \cdot a_k
\]
Next, we find a lower bound for each higher order term. Each such higher order term is the sum of expressions with the following form:
\begin{align*}
&\delta^t \prod_{s=1}^t (\alpha_{k\ell_s} - \alpha_{\ell_s k}) \prod_{j \notin \{k, \ell_1, \ldots, \ell_t\}} a_{kj}\\
&\geq - \big\lvert \delta^t \prod_{s=1}^t (\alpha_{k\ell_s} - \alpha_{\ell_s k}) \prod_{j \notin \{k, \ell_1, \ldots, \ell_t\}} a_{kj} \big\rvert\\
&\geq -\delta^t \cdot u \cdot (2c)^{t-2} \geq -u \cdot \delta^3 \cdot (2c)^r
\end{align*}
for some distinct $\ell_1, \ldots, \ell_t \in [r] \setminus \{k\}$ and some $t \in \mathbb{N}$ with $t \geq 3$. As there are at most $2^r$ such terms, we can conclude that:
\begin{align*}
a^\delta_k = \prod_{j \neq k} a^\delta_{kj} &\geq a_k + \delta v_k + \delta^2 \cdot u \cdot a_k - 2^r (u \cdot \delta^3 \cdot (2c)^r)\\
&\quad= a'(\delta)_k + u\delta^2(a_k - \delta \cdot 2^r \cdot (2c)^r) \geq a'(\delta)_k
\end{align*}
for $\delta \leq \frac{a_k}{2^r \cdot (2c)^r} := \varepsilon''$
with $\varepsilon''$ only depending on the tuple $(a_1, \ldots, a_n)$ and not the vector $v$. Setting $\varepsilon = \min(\varepsilon', \varepsilon'')$, we can conclude that for each $\delta$ with $0 \leq \delta \leq \varepsilon$, $a'(\delta)$ is a representable tuple. This concludes the proof.
\end{proof}

\begin{lemma}
	For each maximal representable tuple $a$, there exists a weakly locally supportive hyperplane for $\isnon$ containing $a$.   
\end{lemma}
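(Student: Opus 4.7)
The plan is to show that the hyperplane $H_a$ constructed in $\cref{cor:hyperplane}$ serves as the required weakly locally supporting hyperplane. Write $H_a = \{x \in \mathbb{R}^r : h^T x = b\}$ with $h \geq \mathbf{0}$, $h \neq \mathbf{0}$, and $h^T a = b$. The property we need is the existence of some $\varepsilon > 0$ such that $h^T z \geq b$ for every $z \in \isnon \cap B(a,\varepsilon)$. I would split on whether $a \in (0,1)^r$: if some coordinate $a_i \in \{0,1\}$, the axis-aligned hyperplane $\{x : e_i^T x = 0\}$ or $\{x : e_i^T x = 1\}$ is weakly locally supporting by exactly the same argument as in the proof of $\cref{cor:convex_interior}$, so from here on I may assume $a \in (0,1)^r$.

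First I would argue by contrapositive: it suffices to show that any $z \in B(a,\varepsilon) \cap [0,1]^r$ with $h^T z < b$ is representable, because then such a $z$ lies in $\srep$ and hence not in $\isnon$. To produce representability of $z$, I would lift $z$ onto $H_a$ along a coordinate axis where $h$ is positive. Concretely, fix an index $i$ with $h_i > 0$ (such $i$ exists since $h$ is non-zero and non-negative) and set
\[
z' := z + \frac{b - h^T z}{h_i}\, e_i.
\]
Then $h^T z' = b$, so $z' \in H_a$, and $z' \geq z$ coordinatewise because the scalar $(b - h^T z)/h_i$ is strictly positive. Moreover, Cauchy--Schwarz gives $|b - h^T z| = |h^T(a - z)| \leq \|h\|\cdot\|z-a\|$, so $\|z' - z\| \leq (\|h\|/h_i)\,\|z-a\|$ and hence $\|z' - a\| \leq (1 + \|h\|/h_i)\,\|z-a\|$.

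The rest is bookkeeping. Let $\varepsilon_0 > 0$ be the parameter from $\cref{theorem:hyperplane_rep}$, guaranteeing that every point of $H_a \cap B(a,\varepsilon_0)$ is representable. I would then choose $\varepsilon > 0$ small enough so that (i) $(1 + \|h\|/h_i)\varepsilon \leq \varepsilon_0$ and (ii) $B(a,(1+\|h\|/h_i)\varepsilon) \subseteq (0,1)^r$; both requirements give strictly positive upper bounds on $\varepsilon$ since $a \in (0,1)^r$, so a common choice exists. With this $\varepsilon$, for any $z \in B(a,\varepsilon) \cap [0,1]^r$ with $h^T z < b$ we get $z' \in H_a \cap B(a,\varepsilon_0)$; hence $z'$ is representable by $\cref{theorem:hyperplane_rep}$, and since $\mathbf{0} \leq z \leq z'$ coordinatewise, $\cref{observation:below_maximal_is_representable}$ yields representability of $z$, completing the contrapositive.

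I do not anticipate any serious obstacle: the real work has already been packaged into $\cref{cor:hyperplane}$ (the non-negativity of the normal $h$, which forces the hyperplane to be aligned so that the cone of ``upward'' directions lies on one side) and $\cref{theorem:hyperplane_rep}$ (representability of points on $H_a$ near $a$). The only mild care needed is the elementary ``project onto $H_a$ in a single positive coordinate'' step and verifying the four simultaneous smallness conditions on $\varepsilon$.
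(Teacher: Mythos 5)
Your proof is correct and follows essentially the same route as the paper's: lift a point on the wrong side of $H_a$ coordinatewise upward onto $H_a$, apply \cref{theorem:hyperplane_rep} to the lifted point, and conclude via \cref{observation:below_maximal_is_representable}. The only cosmetic difference is that you lift along a single coordinate direction $e_i$ (paying a factor $1+\Vert h\Vert/h_i$ in the $\varepsilon$-bookkeeping), whereas the paper lifts along the normal $h$ itself, which makes the distance bound immediate by orthogonality.
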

\begin{proof}
	According to \cref{cor:hyperplane}, there exist $h \in \mathbb{R}^r$ with $h \geq \mathbf{0}$ and $b \in \mathbb{R}$ such that $H_a = \{x \in \mathbf{R}^r \colon h^Tx = b \}$. As $a \in H_a$, we have $h^Ta = b$. Let $\varepsilon' > 0$ such that for each $a' \in H_a \cap B(a,\varepsilon')$, $a'$ is a representable tuple. According to \cref{theorem:hyperplane_rep}, such an $\varepsilon'$ exists. We set $\varepsilon = \min(\varepsilon', \min_{i \in [r]} a_i) > 0$. Let $a'' \in B(a,\varepsilon)$ with $h^Ta'' \leq b$. As  $h^Ta'' \leq b$, there exists some $\delta \geq 0$ such that $h^Ta' = b$ with $a' := a'' + \delta h$. As $h^T(a - a') = b - b = 0$ and $a' - a'' = \delta h$, we can write $a-a'' = (a - a') + (a' - a'')$ with $(a-a')^T(a'-a'') = 0$. Thus, we can conclude that 
	$||a-a'|| \leq ||a-a''|| \leq \varepsilon$. 
	Hence, $a' \in H_a \cap B(a,\varepsilon)$ and therefore $a'$ is a representable tuple. As $h \geq 0$, $a'' = a' - \delta h \geq \mathbf{0}$ is also a representable tuple and therefore $a'' \in \srep$. Thus, $a'' \notin \isnon $, as desired.
\end{proof}

\section{Acknowledgments}
We thank Mohsen Ghaffari and anonymous referees for many helpful suggestions, and Danil Ko\v{z}evnikov and Yannic Maus for enlightening discussions about convexity.
This project has received funding from the European Research Council (ERC) under the European Union’s
Horizon 2020 research and innovation programme (grant agreement No. 853109).

\bibliography{references}
\bibliographystyle{alpha}

\end{document}